\DeclareMathOperator\atan{atan}
\DeclareMathOperator\polylog{polylog}
\title{Fast real and complex root-finding methods for well-conditioned polynomials}
\author{Guillaume Moroz}
\email{guillaume.moroz@inria.fr}
\affiliation{%
  \institution{Université de Lorraine, CNRS, Inria, LORIA}
  \city{Villers-lès-Nancy}
  \country{France}}
\begin{abstract}
Given a polynomial $p$ of degree $d$ and a bound $\kappa$ on a condition
number of $p$, we present the first root-finding algorithms that return
all its real and complex roots with a number of bit operations
quasi-linear in $d \log^2(\kappa)$. More precisely, several condition
numbers can be defined depending on the norm chosen on the coefficients
of the polynomial. Let $p(x) = \sum_{k=0}^d a_k x^k = \sum_{k=0}^d
\sqrt{\binom d k} b_k x^k$. We call the condition number associated with
a perturbation of the $a_k$ the hyperbolic condition number $\kappa_h$,
and the one associated with a perturbation of the $b_k$ the elliptic
condition number $\kappa_e$. For each of these condition numbers, we
present algorithms that find the real and the complex roots of $p$ in
$O\left(d\log^2(d\kappa)\ \text{polylog}
(\log(d\kappa))\right)$ bit operations.

Our algorithms are well suited for random polynomials since $\kappa_h$
(resp. $\kappa_e$) is bounded by a polynomial in $d$ with high
probability if the $a_k$ (resp. the $b_k$) are independent, centered
Gaussian variables of variance $1$.

\end{abstract}
\keywords{Polynomial equation, Root finding, Condition numbers, Real
roots, Complex roots}
\begin{document}
\maketitle

\section{Introduction}

The problem of finding all the real or complex solutions of a polynomial
equation $p(z)=0$ has been extensively investigated, both in theory and
in practice. If $p$ is a polynomial of degree $d$ with integer
coefficients of bit size bounded by $\tau$, the state-of-the-art methods
to find the real or complex roots of $p$ require a number of bit
operations in $O(d^2(d+\tau)\polylog(d\tau))$ \cite{Pjsc02, BSSYjsc18}. In the case where
the polynomial is well-conditioned, the best methods in the state of the art
also require at least a quadratic number of bit operations to find its roots. By
well-conditioned, we mean that the variation of the roots of $p$ with
respect to the variation of its coefficients is small
(\cite[chapter 12]{BCSSbook98}, \cite[chapter~14]{Bbook13} and
references therein).

For ill-conditioned polynomials, the distance between two roots can as
small as $2^{-d\tau}$. Pan considered optimal an algorithm that used
$O(d)$ arithmetic operations, where the number of bit operation for each
arithmetic operation is in $O(d\tau)$, and in this sense, he provided a
near-optimal algorithm.  On the other hand, when a polynomial is
well-conditioned, the distance between two roots is not exponentially
small in $d$.

Random polynomials are well-conditioned with a high probability.  More
precisely, let $p(x)$ be a polynomial of degree $d$ where each
of its coefficients is a Gaussian random variable of variance $\binom
d k$. There exist constants $A>1$ and $B>1$ such that 
the so-called \emph{elliptic condition number} (see
Definition~\ref{def:condition}) is lower than $n^A$ with probability
higher than $1-1/n^B$ \cite{CKMWaam12}. A similar result was
proven for the so-called \emph{hyperbolic condition number} when the
variance is $1$ \cite{DNVlms15}.  Moreover, the distribution of the roots of
polynomials with random coefficients is well understood (\cite{EKbams95} and
references therein). Thus it makes sense to provide
algorithms that performs better than the general case for random
polynomials and for well-conditioned polynomial.

Provided that we know a bound $\kappa$ on a condition number of $p$, we
will show that it is indeed possible to find all the roots of $p$ with a
number of operations quasi-linear in $d$ and polynomial in
$\log(\kappa)$.

Even though a condition number was not explicitly used, the analysis of
root-finding methods for well-conditioned polynomials started with Smale
\cite{Sbams81} who studied the probability of failure of the Newton method.
The Newton method is one of the most famous iterative method, that
converges quadratically toward a single root $\zeta$ of $p$ provided
that the initial point is close enough to $\zeta$
(\cite[chapter~8]{BCSSbook98}, \cite[chapter~3]{Dbook06},
\cite[chapter~15]{Bbook13} and references therein). It was later shown
that it is even
possible to construct a set $S_d$ of $d\log^2(d)$ points such that for
all polynomials $p$ and each root $\zeta$ of $p$, there exists a point
in $S_d$ such that the Newton iteration eventually converges toward
$\zeta$ \cite{HSSim01}. Explicit bounds polynomial in the condition
number were derived and improved for multivariate polynomial system of
equations, based notably on homotopy methods
(\cite{CSacm99,CKMWjc08,BPfocm11,Lfocm17} among others).  One drawback of those
approaches is that they require to evaluate $p$ on at least $d$ points,
which leads to a number of arithmetic operations at least quadratic in
$d$.  Some methods based on modified Newton operators, such as the
Weierstrass method (\cite{BGPcma04} and references therein) or the Aberth-Ehrlich method
\cite{Eacm67} were implemented with success, notably in the software
MPSolve \cite{BAna00,BRjcam14}.

For general polynomials, including ill-conditioned ones, fast numerical
factorization is the first approach to provide the state of the art
bound in $O(d^2(d+\tau)\polylog(d\tau))$ \cite{Pjsc02}. However this method is
difficult to implement.

Another family of methods that are efficient in practice are the
subdivision methods. The idea is to subdivide recursively a domain that
contains the roots of $p$ in subdomains, and to reject or accept the
subdomains according to criteria that guarantee that a subdomain
contains one or zero root. For real roots, the criteria that one may use
are notably the Descartes' rule of signs (\cite{RZjcam04} and references
therein),
the Budan's theorem \cite{TEesa06,Ssnc08, ASVnamc08}, or the Sturm's
theorem \cite{BPRbook06} among others. For complex roots,
one may use Pellet's test \cite{BSSYjsc18} or Cauchy's integral theorem
\cite{IPmacis20,IPissac20} among others. Combining subdivision approaches with
Newton iterations allows to match the complexity bound of Pan's algorithm
for real \cite{SMjsc16}.  Subdivision methods are more commonly implemented, notably in
the software ANewDsc \cite{KRSissac16}, SLV \cite{Tacm16}, the package RootFinding in
Maple \cite{maple}, the package real\_roots in sage \cite{sagemath},
Ccluster \cite{IPYicms18} among others.

We can also mention approaches based on the computation of the
eigenvalues of the companion matrix associated to $p$ \cite{Mmn91}.
These approach has the advantage of being numerically stable in many cases
\cite{EMmc95}. These methods are implemented notably in Matlab
\cite{matlab} and numpy \cite{numpy}.

\subsubsection{Contribution}
Focusing on univariate polynomial equations, we develop new algorithms
that are for the first time polynomial in the logarithm of a condition
number, and quasi-linear in the degree. Our approaches work for two
classical condition numbers that we define here for $x$ in the interval
$[0,1]$ and for $z$ in the complex unit disk $D(0,1)$.

Following the theory of
condition number associated to the root-finding problem
\cite[chapter~14~and~16]{Bbook13}, we introduce the following definitions.

\begin{definition}
  \label{def:condition}
  \sloppy Given the polynomial $p(x) = \sum_{k=0}^d a_k x^k = \sum_{k=0}^d \sqrt{\binom d k} b_k x^k$,
  let $f(t) = \cos^d(t) p(\tan(t))$.
  The \emph{real hyperbolic condition number} associated to $p$ is:
  $$\kappa_{h}^{\mathbb R}(p) = \max_{x\in [0,1]} \min\left( \frac{\|a\|_1}{|p(x)|},
  \frac{d\| a\|_1}{|p'(x)|}\right)$$
  The \emph{real elliptic condition number} associated to $p$ is:
  $$\kappa_{e}^{\mathbb R}(p) = \max_{t\in [0,\frac \pi 4]} \min\left( \frac{\|b\|_2}{|f(t)|},
  \frac{\sqrt{d}\|b\|_2}{|f'(t)|}\right)$$

  For $p(z)$ with $z$ in the unit disk, letting $p_{\theta}(x) =
  p(xe^{i\theta})$, we define the \emph{complex hyperbolic} and the
  \emph{complex elliptic} condition numbers as $\kappa_h^{\mathbb C}(p)
  = \max_{\theta \in [0, 2\pi]} \kappa_h^{\mathbb R}(p_\theta)$ and
  $\kappa_e^{\mathbb C}(p) = \max_{\theta \in [0, 2\pi]}
  \kappa_e^{\mathbb R}(p_\theta)$  respectively.
\end{definition}

The justification for the name \emph{hyperbolic} and \emph{elliptic}
comes from the fact that when the $a_k$ are independent,
centered Gaussian variables of variance $1$,
then the density of the root distribution in $[0,1]$  converges to
$1/(\pi(1-t^2))$ when $d$ converges to infinity. Similarly, when the $b_k$ are independent, centered
Gaussian variables of variance $1$, then the root distribution has
density $\sqrt{n}/(\pi(1+t^2))$ \cite{EKbams95}.

Remark that by symmetry of the weights we consider in front of the
coefficients, we can reduce the problem of finding all the roots in
$\mathbb R$ or in $\mathbb C$ to the problem of all finding all the
roots in $[0,1]$ and $\mathbb C$ respectively, through the changes of
variable $x \mapsto -x$ and $x \mapsto 1/x$.

For our algorithms, we consider polynomials with bit-stream
coefficients, where the first $k$ bits can be accessed in $O(k)$ bit
operations. Our output is a list of \emph{approximate zero} as
introduced by Smale \cite{Sbams81}, in the
sense that for any point $z_0$ returned by our algorithm, the sequence
$z_{k+1} = z_k - p(z_k)/p'(z_k)$ converges quadratically toward its
associated root of $p$.  We can now state our main result.

\begin{theorem}
  Let $p(x)$ be a polynomial of degree $d$, with bit-streams
  coefficients.

  There exist two algorithms that finds all its real roots in the
  interval $[0,1]$ in $O(d \log^2(d\kappa)\polylog(\log(d\kappa)))$ with
  $\kappa = \kappa_h^{\mathbb R}(p)$ and $\kappa = \kappa_e^{\mathbb
  R}(p)$ respectively.

  There exist two algorithms that finds all its complex roots
  in the unit disk in $O(d \log^2(d\kappa)\polylog(\log(d\kappa)))$ with
  $\kappa = \kappa_h^{\mathbb C}(p)$ and $\kappa = \kappa_e^{\mathbb
  C}(p)$ respectively.
\end{theorem}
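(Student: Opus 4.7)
The plan is to build a subdivision algorithm and rely on fast multipoint evaluation to cut a factor of~$d$ from the naive cost. I would first reduce the four cases to a single kernel: the complex cases reduce to the real ones by covering the unit circle by $O(1)$ values of $\theta$ and running the real algorithm on each $p_\theta$ (using that $\kappa_\bullet^{\mathbb C}$ controls every slice); and the elliptic case reduces to a hyperbolic-style analysis after the substitution $x = \tan(t)$ that turns $p$ into $f$, since $\|b\|_2$ is the $\ell^2$-norm in the orthonormal Krawtchouk-type basis adapted to $f$. So the core problem is: given $p$ on $[0,1]$ with a bound $\kappa$ on $\kappa_h^{\mathbb R}(p)$, isolate all roots in quasi-linear time.

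For the kernel, I would do dyadic subdivision of $[0,1]$ using an exclusion/inclusion test driven by the condition number. At the center $c$ of a subinterval $I$ of length $\ell$, evaluate $p(c)$ and $p'(c)$ with a working precision proportional to $\log(d\kappa)$. The definition of $\kappa_h^{\mathbb R}$ guarantees that for every $c$, either $|p(c)|/\|a\|_1 \geq 1/\kappa$ or $|p'(c)|/(d\|a\|_1) \geq 1/\kappa$. Combining this with the Markov/Bernstein-type bound $|p^{(k)}(x)| \lesssim d^k \|a\|_1$ on $[0,1]$ yields: once $\ell \ll 1/(d\kappa)$, either a Taylor argument certifies $p$ has no zero in $I$ (exclusion), or $|p'(c)|$ dominates higher derivatives on $I$ so that $c$ is an approximate zero in Smale's sense and $I$ contains exactly one root (inclusion). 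Thus the subdivision tree has depth $O(\log(d\kappa))$, and since the total number of root-containing subintervals produced at any level is at most $d$, the number of nodes is $O(d\log(d\kappa))$.

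The quasi-linear cost then comes from performing the evaluations level-by-level in batches. At each level I would evaluate $p$ and $p'$ simultaneously at all $O(d)$ surviving centers via an FFT-based multipoint evaluation, run at the adaptive precision $O(\log(d\kappa))$ dictated by the condition-number analysis above; the bit-stream model is used only to fetch the coefficients to that precision in linear time. Each level therefore costs $\widetilde O(d \log(d\kappa))$ bit operations, and summing over $O(\log(d\kappa))$ levels gives the announced $O(d \log^2(d\kappa)\,\polylog(\log(d\kappa)))$ bound, with Smale's $\gamma$-theory supplying the approximate-zero output.

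The main obstacle I anticipate is making the fast multipoint evaluation numerically reliable at the target precision: classical FFT-based multipoint evaluation is known to be delicate in floating-point, and here we must certify both the exclusion test (lower bound on $|p|$) and the inclusion test (Smale-type certification) in spite of using only $O(\log(d\kappa))$ working bits. I would address this by exploiting the structure of dyadic centers so that the evaluation reduces to evaluating Taylor expansions at scaled-and-shifted points, computed through $O(\log d)$ layers of truncated polynomial multiplications whose forward error can be tracked explicitly. Once that backbone is in place, the remaining ingredients (change of variable to $\tan(t)$ for the elliptic case, covering by rotations for the complex case, and the Smale approximate-zero wrap-up) are standard and do not alter the asymptotic cost.
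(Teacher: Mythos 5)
Your core idea -- exclusion/inclusion subdivision with $O(\log(d\kappa))$ depth driven by the condition number, plus level-by-level FFT batch evaluation -- is a genuinely different route from the paper, which instead covers $[0,1]$ (respectively the unit disk) by $O(d/\log(d\kappa))$ disks chosen so that in each disk the rescaled Taylor coefficients of $p$ decay geometrically past index $O(\log(d\kappa))$, truncates to a polynomial of that small degree, solves the small polynomial, and transfers the roots back via Kantorovich's theorem (Lemma~\ref{lem:approximation}). Your subdivision scheme is plausible for the real hyperbolic kernel, but the proposal has a fatal gap in the complex reduction and an unresolved one in the evaluation backbone.

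The complex-to-real reduction is wrong. Running the real algorithm on $p_\theta(x)=p(xe^{i\theta})$ over $[0,1]$ only detects roots of $p$ that lie on the ray $\{re^{i\theta}\colon 0\le r\le 1\}$. Finitely many rays (let alone $O(1)$ of them) form a measure-zero subset of $D(0,1)$, so a generic complex root is simply never seen; the fact that $\kappa_\bullet^{\mathbb C}$ ``controls every slice'' does not turn a one-dimensional root finder into a two-dimensional root isolator. The complex problem is inherently planar, and the paper handles it by a genuine 2D covering of the disk (Equations~\eqref{eq:hc} and~\eqref{eq:ec}): each annulus or band is covered by $M_n$ rotated copies of a small disk, with $M_n$ growing geometrically in $n$ (hyperbolic) or as $\sqrt{d/\tau}$ (elliptic), for a total of $\Theta(d/\log(d\kappa))$ disks -- not $O(1)$ rotations -- and it is exactly the need to evaluate $p$ at all these rotated centers that forces the FFT over the $M_n$-th roots of unity in Step~$B$ of Algorithm~\ref{alg:rootfinding}.

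The second gap is the one you flag but do not close: fast multipoint evaluation at $O(d)$ \emph{arbitrary} dyadic centers with only $O(\log(d\kappa))$ working bits. Subproduct-tree evaluation at unstructured point sets loses precision at a rate controlled by the conditioning of the intermediate remainders, and certifying the output bits needed by your exclusion/inclusion tests at that precision is not a routine bookkeeping exercise. The paper avoids the issue entirely by evaluating only at scaled roots of unity, where the FFT is backward stable, and by truncating to degree $O(\log(d\kappa))$ using the explicit coefficient-decay bounds (Lemmas~\ref{lem:hyperbolic:taylor} and~\ref{lem:elliptic:taylor}). Relatedly, the substitution $x=\tan(t)$ does not ``reduce'' the elliptic case to the hyperbolic one: $f(t)=\cos^d(t)p(\tan t)$ is not a polynomial, and the paper must exploit the antisymmetric structure $v'(x)=Av(x)$, $e^{xA}$ orthogonal, to obtain the analogous decay and second-derivative bounds. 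Until you replace the $O(1)$-slices reduction with an actual planar covering and give a stability analysis of the evaluation step, the theorem as stated is not established by this argument.
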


The main idea of our algorithms is to approximate $p$ with a
piecewise polynomial function, where each polynomial has a degree in
$O(\log(d\kappa))$. This is achieved by partitioning the interval
$[0,1]$ and the unit disk following the distribution of the roots. Then
using Kantorovich's theory, we show that a good enough approximation the roots of the
piecewise polynomial is a set of approximated roots associated to all
the roots  of $p$. Our method is summarized in
Algorithm~\ref{alg:rootfinding}.

For the correctness of Algorithm~\ref{alg:rootfinding}, we prove in
key Lemma~\ref{lem:approximation} that if a polynomial $g$ of small
degree is sufficiently close to a series $f$, then the problem of
finding the root of $f$ can be reduced to the problem of finding the roots of $g$. Then in
Section~\ref{sec:hyperbolic:approximation}
and~\ref{sec:elliptic:approximation}, we show that the piecewise
polynomials that we construct in Algorithm~\ref{alg:rootfinding} satisfy
the assumptions of Lemma~\ref{lem:approximation}.

For the bound on the number of bit operations, the main steps that we
need to analyse
in Algorithm~\ref{alg:rootfinding} are Step~$B$ and Step~$C$. In
Step~$C$ we need to solve $\sum_{n=0}^N M_n$ polynomials of degree $P$.
Using a classical algorithm with the state-of-the-art complexity
(\cite[Theorem~2.1.1]{Pjsc02} and \cite{BSSYjsc18}), we can find all the roots in the unit disk of each
polynomial with an error bounded by $2^{-P}$, and with a number of bit
operations in $O(P^3\polylog(P))$. Then, since $P$ is in $O(\log(d\kappa))$ and
the sum of the $M_n$ is in $O(d/\log(d\kappa))$ in all cases (see
Table~\ref{tab:algodata}), we conclude that the bound on the number
of bit operations to perform Step~$C$ is in
$O(d\log^2(d\kappa)\polylog(\log(d\kappa)))$.

In Step~$B$, if we perform the loop as written in
Algorithm~\ref{alg:rootfinding}, this leads to a number of operations
quadratic in $d$. Instead, in Section~\ref{sec:hyperbolic:complexity}
and~\ref{sec:elliptic:complexity}, we show how we can modify Step~$B$
such that the number of bit operations for this step is in
$O(d\log^2(d\kappa)\polylog(\log(d\kappa)))$.

\begin{algorithm}
  \caption{Root-finding algorithm}
  \label{alg:rootfinding}
  \begin{algorithmic}
    \Require \hspace{-0.25cm}\begin{minipage}[t]{7.5cm} \begin{itemize}
          \item[$c$:] list of $d+1$ coefficients
          \item[$t$:] type of the monomial weight ($elliptic$ or $hyperbolic$)
          \item[$\kappa$:] bound on the condition number (see Definition~\ref{def:condition})
        \end{itemize}
      \end{minipage}\\
      \Ensure $result$: \begin{minipage}[t]{6cm}list of the approximate roots of the function\\
        $\begin{cases}
       \sum_{k=0}^d c[k]x^k & \text{ if $t$ is $hyperbolic$}\\
       \sum_{k=0}^d c[k]\sqrt{\binom d k}x^k & \text{ if $t$ is $elliptic$}
     \end{cases}$
   \end{minipage}

    \State
    \State \emph{A. Initialization}
    \State Variables depending on $t$ (see Table~\ref{tab:algodata}):
    \State \hspace{\algorithmicindent}\begin{tabular}{@{}l@{\ }l}
      $v$ & $\gets$ list of $d+1$ monomial functions\\
      $h$ & $\gets$ a scale function\\
      $\gamma$ & $\gets$ list of $N$ real numbers, centers of disks\\
      $\rho$ & $\gets$ list of $N$ real number, radii of disks\\
      $M$ & $\gets$ list of $N$ integers
    \end{tabular}
    \State $P \gets \lceil10\log_2(d\kappa)\rceil$
    \State $w \gets$ list of $P$-th roots of unity
    \For{$0 \leq n \leq N$}
    \State $z[n]$ $\gets$ list of the $M_n$-th roots of unity
    \EndFor
    \State $result \gets$ empty list
    \State
    \State \emph{B. Evaluation}
    \For{$0 \leq n < N$ and $0 \leq p < P$ }
    \For{$0 \leq m < M_n$ }
    \State $e[m,n,p] \gets \sum_{k=0}^d c[k]v[k](\gamma[n]+\rho[n]w[p])z[n,m]^k$
    \State \phantom{$e[m,n,p] \gets$} up to precision $\|c\|2^{-P}$ 
    \EndFor
    \EndFor
    \State
    \State \emph{C. Interpolation and root-finding}
    \For{$0 \leq n < N$}
    \For{$0 \leq m < M_n$}
    \State   $g \gets$ polynomial such that $g(w[p]) = e[m,n,p]$ for all $p$
    \State   \phantom{$g \gets$} with coefficients up to precision
    $\|c\|2^{-P}$
    \State   $s \gets$ roots of $g$ up to precision $\|c\|2^{-P}$
    \For{$0 \leq k < \text{size of $s$}$}
        \State Append $h(\gamma[n]+\rho[n]s[k])z[n,m]$ to $result$
      \EndFor
    \EndFor
    \EndFor
    \State
    \State \Return $result$
  \end{algorithmic}
\end{algorithm}

\begin{table}
  \begin{tabular}{|l|c|c|c|c|}
    \hline
    \multirow{2}{1cm}{Type Domain} & \multicolumn{2}{c|}{Hyperbolic} & \multicolumn{2}{c|}{Elliptic} \\ 
    \cline{2-5}
                                   & $[0,1]$ & $D(0,1)$ & $[0,1]$ & $D(0,1)$ \\
    \hline
    $v_k(x)$ & \multicolumn{2}{c|}{$x^k$} & \multicolumn{2}{c|}{$\binom d k \sin^k(x)\cos^{d-k}(x)$} \\
    \hline
    $h(x)$ & \multicolumn{2}{c|}{$x$} & \multicolumn{2}{c|}{$\tan(x)$} \\
    \hline
    $\tau$ & \multicolumn{4}{c|}{$\log_2(d\kappa)$} \\
    \hline
    $N$ & \multicolumn{2}{c|}{$O(\log(d/\tau))$} &
    \multicolumn{2}{c|}{$O(\sqrt{d/\tau})$} \\
    \hline
    $M_n$ & $1$ & $2^{n+4}$ & 1 & $O( \sqrt{d/\tau})$ \\
    \hline
    $\gamma$ & \multirow{2}{*}{Eq.~\eqref{eq:hr}} &
    \multirow{2}{*}{Eq.~\eqref{eq:hc}} 
             & \multirow{2}{*}{Eq.~\eqref{eq:er}} &
             \multirow{2}{*}{Eq.~\eqref{eq:ec}} \\
    \cline{1-1}
    $\rho$ & & & & \\
    \hline
  \end{tabular}
  \caption{Values for the initialisation of the variables in
  Step~$A$ of Algorithm~\ref{alg:rootfinding}}
  \label{tab:algodata}
\end{table}

First we will prove in Section~\ref{sec:preliminaries} that we can
reduce the root-finding problem to the problem of finding
the roots of a smaller degree polynomial. Then in
Section~\ref{sec:hyperbolic} and~\ref{sec:elliptic}, we will prove the
correctness and bound the complexity of Algorithm~\ref{alg:rootfinding}
for polynomials with small hyperbolic condition number and small
elliptic condition number respectively. Finally in
Section~\ref{sec:extensions}, we will discuss open
questions related to our approach.

\section{Preliminaries}
\label{sec:preliminaries}
\subsection{Notations}

Given a polynomial or an analytic series $f$, we will denote by $f'$ and $f''$ the derivative and the second
derivative of $f$, and by $f^{(k)}$ the $k$-th derivative of $f$. Given
a vector $v$, we will denote by $\|v\|_1$, $\|v\|_2$ and $\|v\|_\infty$
the classical norm $1$, $2$ and infinity of $v$. The transpose of $v$ is
denoted by $v^T$ and its conjugate transpose by $v^H$ and if $w$ is
another vector, $v^H \cdot w$ denotes their scalar product. For a matrix
$A$, we denote by $\|A\|_k$ the induced norm $\sup_{x \neq 0}
\|Ax\|_k/\|x\|_k$.

For a polynomial $p(x) = \sum_{k=0}^d a_k x^k = \sum_{k=0}^d
\sqrt{\binom d k} b_k x^k$, we denote by $\|p\|_1$ the norm $1$ of the
vector $(a_k)$, and by $\|p\|_W$ the norm $2$ of the vector $(b_k)$.

Finally, we will denote by $I$ the interval $[0,1]$, by $U$ the unit
disk, and by $D(\gamma, \rho)$ the complex disk of radius $\rho$
centered at $\gamma$.

\subsection{Roots of approximated polynomial}

Based on Kantorovich's theory, we show that if a polynomial and a
series have coefficients close enough, then the roots of the polynomial
are in the basin of quadratic convergence of the roots of the series.

We state the following theorem for complex roots in the unit disk
$D(0,1) \subset \mathbb C$. Remark that in the case where $f$ and $g$
have real coefficients, it holds for their real roots in the interval
$[0,1] \subset \mathbb R$

\begin{lemma}
  \label{lem:approximation}
  Let $f(x) = \sum_{k=0}^{\infty} f_k x^k$ be
  an analytic series with radius of convergence greater than $1$.
  Assume that there exist $c>0$, $\kappa > 32$, $s > 1$ and an integer $m
  > 2 \log_2(s\kappa^2)$ such that for all point $z$ in the
  unit disk:
  \begin{itemize}
    \item $|f(z)|\leq c/(s\kappa^2)$ implies $|f'(z)| > c/\kappa$,
    \item $|f''(z)| < cs$
    \item for all $k>m$ we have $|f_k| \leq c/2^k$.
  \end{itemize}
  Let $g(x) = \sum_{k=0}^m g_kx^k$ be a polynomial of degree $m$ such that for all
  $0 \leq k \leq m$ we have $|f_k - g_k| \leq c/2^m$.

  Then, for each root $\zeta$ of $f$ in the unit disk, $f$ has no other
  root in $D(\zeta, 1/(2s\kappa))$ and $g$ has a root in the disk
  $D(\zeta, 1/(16s\kappa))$. Moreover, if $g$ has a root $\eta$ in the
  unit disk, then $f$ has a root in the disk $D(\eta, 1/(16s\kappa))$.
\end{lemma}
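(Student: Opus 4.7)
The plan is to first obtain a uniform $C^2$ approximation bound between $f$ and $g$ on the unit disk, and then combine it with the three hypotheses through Taylor expansion, Rouché's theorem, and Newton--Kantorovich. For the approximation bound I would split each series $f^{(j)}-g^{(j)}$ for $|z|\le 1$ into a head $k\le m$ (bounded termwise by $k^j c/2^m$) and a tail $k>m$ (bounded by a geometric series using $|f_k|\le c/2^k$), yielding $\sup_{|z|\le 1}|f^{(j)}(z)-g^{(j)}(z)|=O(m^{j+1}c/2^m)$ for $j=0,1,2$. Under the standing hypotheses $m>2\log_2(s\kappa^2)$, $\kappa>32$ and $s>1$, each such quantity is much smaller than $c/(s\kappa^2)$.

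For the separation statement, fix a root $\zeta$ of $f$ in the unit disk. Because $f(\zeta)=0$, the first hypothesis gives $|f'(\zeta)|>c/\kappa$. If a second root $\zeta'$ satisfied $|\zeta-\zeta'|<1/(2s\kappa)$, then the Taylor identity $0=f(\zeta')=f'(\zeta)(\zeta'-\zeta)+\tfrac12 f''(\xi)(\zeta'-\zeta)^2$ combined with $|f''|<cs$ would force $|f'(\zeta)|<c/(4\kappa)$, a contradiction. To produce a root of $g$ near $\zeta$, I would apply Rouché on the circle $|z-\zeta|=r:=1/(16s\kappa)$: the same Taylor expansion gives $|f(z)|\ge (c/\kappa)r-\tfrac{cs}{2}r^2=31c/(512 s\kappa^2)$, which by the approximation bound strictly dominates $|f(z)-g(z)|$ there. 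Hence $f$ and $g$ have the same number of roots in $D(\zeta,r)$, which is exactly one since $\zeta$ is a simple root of $f$ and is the only root of $f$ in this disk.

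For the converse direction, let $\eta$ be a root of $g$ in the unit disk. Since $|f(\eta)|=|f(\eta)-g(\eta)|$ is much smaller than $c/(s\kappa^2)$, the first hypothesis gives $|f'(\eta)|>c/\kappa$. Applying Newton--Kantorovich to $f$ starting at $\eta$, the Kantorovich parameter $|f(\eta)|\cdot\sup|f''|/|f'(\eta)|^2$ is at most $O(m\,s\kappa^2/2^m)\le 1/2$, so Newton's iteration converges quadratically to a root of $f$ at distance at most $2|f(\eta)/f'(\eta)|\le 1/(16 s\kappa)$ from $\eta$. The main technical obstacle is bookkeeping: the whole argument reduces to inequalities of the form $P(m)\,2^{-m}\le C/(s\kappa^2)$ for small constants $C$, all of which follow from $2^m\ge s^2\kappa^4$ combined with $\kappa>32$ and $s>1$, and the tightest constants appear in the factor $1/16$ matching the conclusion.
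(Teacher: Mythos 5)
Your proposal is sound and reaches the same conclusions, but the route differs meaningfully from the paper's. The paper applies Kantorovich twice, symmetrically: once to $f$ starting at a root $\eta$ of $g$ (bounding $\beta=|f(\eta)/f'(\eta)|\le 1/(32s\kappa)$ and $K=\sup|f''/f'(\eta)|\le s\kappa$, so $2\beta K\le 1/16$), and once to $g$ starting at a root $\zeta$ of $f$, which requires the somewhat delicate step of bounding $|g'(\zeta)|$ from below and $|g''|$ from above via the coefficient differences. You replace the second Kantorovich application with Rouché on the circle $|z-\zeta|=1/(16s\kappa)$, lower-bounding $|f|$ there by Taylor ($|f(z)|\ge |f'(\zeta)|r-\tfrac12 cs\,r^2=31c/(512s\kappa^2)$) and comparing against the uniform $|f-g|=O(mc/2^m)$ bound. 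You also replace the paper's Kantorovich uniqueness argument for separation with a direct Taylor contradiction. What Rouché buys you is a cleaner, more elementary argument that avoids re-estimating $g'$ and $g''$ and the associated constant-tracking; it directly certifies that $g$ has exactly one root in the disk (an exact count the Kantorovich version gets only indirectly). What Kantorovich buys the paper is a uniform framework for both directions and a statement that naturally delivers the ``approximate zero'' guarantee (quadratic Newton convergence) used later. Two small technical points worth tightening: the identity $f(\zeta')=f'(\zeta)(\zeta'-\zeta)+\tfrac12 f''(\xi)(\zeta'-\zeta)^2$ does not hold pointwise for complex analytic $f$; you should use the integral remainder form, which yields the same bound $\tfrac12 cs|\zeta'-\zeta|^2$. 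And both the Rouché circle and the Kantorovich ball may protrude slightly outside the unit disk where the hypothesis $|f''|<cs$ is stated; neither you nor the paper addresses this, so it is not a point of difference, but it deserves a word since the radius of convergence being $>1$ does not by itself extend the $cs$ bound.
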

\begin{proof}
  First, let $\eta$ be a root of $g$ in the unit disk.  Then $|f(\eta)|
  = |f(\eta)-g(\eta)| \leq c\frac {m+1} {2^m} + \frac{c} {2^m} \leq
  c\frac{m+2}{2^m}$ using the bounds on the difference of the
  coefficients of $f$ and $g$. 
  In particular, with the lower bound on $m$, we have $m \geq
  \log_2(\kappa) + m/2$ and $m \geq 20$ since $s\kappa^2 \geq
  2^{10}$. This implies that $|f(\eta) \leq c(m+2)/2^m \leq c/(s\kappa^2) \cdot
  (m+2)/2^{(m/2)} \leq c/(32s\kappa^2)$.  This implies that $|f'(\eta)|
  \geq c/\kappa$. In turn, we have $\beta = |f(\eta)/f'(\eta)| \leq
  1/(32s\kappa)$, and $K = \max_{z \in U}(|f''(z)/f'(\eta)|) \leq
  s\kappa$. Thus, $2\beta K \leq 1/16 \leq 1$. Using Kantorovich's theory
  \cite[Theorem~88]{Dbook06}, this
  ensures that $f$ has a root in $D(\eta, 2\beta)$ which implies that
  $f$ has a root in the disk $D(\eta, 1/(16s\kappa))$. Moreover, using
  Kantorovich's theory again \cite[Theorem~88]{Dbook06}, since $2K \leq
  2s\kappa$, this implies that $\zeta$ is the only
  root of $f$ in the disk $D(\zeta, 1/(2s\kappa))$.

  Reciprocally, let $\zeta$ be a root of $f$ in the unit disk. Then
  $|g(\zeta)| = |g(\zeta)-f(\zeta)| \leq c\frac{m+2}{2^m}$ using the
  bounds on the difference of the coefficients of $f$ and $g$.
  Similarly $|g'(\zeta) - f'(\zeta)| \leq c\frac{m(m+1)}{2^{m+1}} +
  c\frac{m+2}{2^{m}} \leq c\frac{(m+2)^2}{2^{m+1}}$. And for all $z \in
  U$ we have also $|g''(z) - f''(z)| \leq c\binom {m+2}{3}/2^{m-1} +
  c(m^2+3m+4)/2^m \leq c \frac{(m+3)^3}{3 \cdot 2^m}$.

  This implies that:
  \begin{align*}
    |g(\zeta)| &\leq c\frac{m+2}{2^m}\\
    |g'(\zeta)| &\geq c/\kappa - c\frac{(m+2)^2/2}{2^m}\\
    |g''(z)| &\leq cs + c\frac{(m+3)^3/3}{2^m}
  \end{align*}
  In particular, with the lower bound on $m$, we have $m \geq
  \log_2(\kappa) + m/2$ and $m \geq 20$, which implies $|g'(\zeta)|
  \geq c/(2\kappa)$ and $|g''(z)| \leq c(s + 1/40)$. Such that
  $\beta := |g(\zeta)/g'(\zeta)| \leq (m+2) \kappa / 2^{m-1}$ and $K :=
  \max_{z \in U}(|g''(z)/g'(\zeta)|) \leq 2(s+1/40)\kappa \leq 4.1/2 s \kappa$.

  Let $r = 1/(4.1s\kappa)$. Using Kantorovich's theory
  \cite[Theorem~85]{Dbook06} this implies that $\zeta$ is the unique root
  of $f$ in the disk $D(\zeta, r)$.

  Moreover, $\beta/(2r) \leq 4.1s\kappa^2(m+2)/2^{m}\leq 4.1(m+2)/2^{m/2}
  \leq 1/8$ for $m \geq 19$, which is the case since $s\kappa^2 \geq
  2^{10}$. In this case, using Kantorovich's theory
  again \cite[Theorem~88]{Dbook06}, $2 \beta K \leq \beta/r \leq 1$ ensures that $g$
  has a root in $D(\zeta, 2\beta)$. Moreover, since $\beta/(2r) \leq 1/8$,
  this implies that $2\beta \leq r/2$ and $g$ has a root 
  $\eta$ in the disk
  $D(\zeta, r/2)$. In particular, $\zeta$ is the only root of $f$ in the
  disk $D(\eta, r/2)$, which implies that $\zeta \in D(\eta,
  1/(16s\kappa))$ and thus $\eta \in D(\zeta, 1/(16s\kappa))$.
\end{proof}

\section{Hyperbolic case}
\label{sec:hyperbolic}

In this section we consider the polynomial $p(x) = \sum_{k=0}^d a_k
x^k$, over the interval $[0,1]$ and over the complex unit disk $D(0,1)$.

\subsection{Bound on the coefficients}

For a complex number $\gamma$ and a real number $\rho$, we define the
polynomial $p_{\gamma,\rho}(x) = p(\gamma + \rho x)$.

\begin{lemma}
  \label{lem:hyperbolic:taylor}
  Let $\rho > 0$  be a real and $\gamma$ a complex number in $D(0,1)$ such that
  either $2\rho \leq 1 - |\gamma|$, or $\rho \leq \tau/(2ed)$.
  Let $c_k$ be the coefficients of $x^k$ in $p_{\gamma,\rho}(x)$.
For all $k > \tau$:
\begin{align*}
  |c_{k}| &\leq \frac 1 {2^k} \|p\|_1
\end{align*}
\end{lemma}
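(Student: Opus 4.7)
The starting point is the multinomial expansion
$$c_k = \rho^k \sum_{j=k}^d a_j \binom{j}{k}\gamma^{j-k},$$
from which both hypotheses will be made to yield the same bound $|c_k|\leq \|p\|_1/2^k$, by two complementary arguments.

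In the first case, $2\rho \leq 1-|\gamma|$, I would view $p_{\gamma,\rho}$ as an analytic function on the disk $|x|<2$: by the triangle inequality the image of $\{|x|\leq 2\}$ under $x\mapsto \gamma+\rho x$ lies inside the closed unit disk, where $|p(z)|\leq \sum_j |a_j||z|^j \leq \|p\|_1$. Cauchy's estimates on the circle $|x|=2$ then give $|c_k|\leq \|p\|_1/2^k$ directly, for every $k\geq 0$ (in particular for $k>\tau$).

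In the second case, $\rho\leq \tau/(2ed)$, the ambient disk is too small for the Cauchy trick, but $k>\tau$ is large enough that a direct combinatorial estimate works. Bounding $|\gamma|^{j-k}\leq 1$ together with $\binom{j}{k}\leq \binom{d}{k}\leq (ed/k)^k$ in the explicit formula gives
$$|c_k|\leq \rho^k\binom{d}{k}\|p\|_1 \leq \left(\frac{e\rho d}{k}\right)^k \|p\|_1,$$
and the two assumptions $\rho\leq \tau/(2ed)$ and $k>\tau$ together force $e\rho d/k \leq \tau/(2k) <1/2$, yielding the claim.

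Neither step is technically difficult; the main thing to get right is the matching of constants in the second case, where the factor $e$ in the hypothesis $\rho\leq \tau/(2ed)$ is exactly what is needed alongside $\binom{d}{k}\leq (ed/k)^k$ to push the ratio below $1/2$ as soon as $k$ exceeds $\tau$. A unified argument seems unlikely since the two hypotheses correspond to genuinely different regimes: a macroscopic disk where analyticity of $p$ on the unit disk is the active fact, versus a microscopic disk where it is the factorial growth of $k!$ relative to $d^k$ that dominates.
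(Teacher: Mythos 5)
Your proof is correct. Case 2 (the regime $\rho\leq \tau/(2ed)$) is essentially identical to the paper's argument: bound $|\gamma|^{j-k}$ by $1$, bound $\binom{j}{k}$ by $\binom{d}{k}\leq (ed/k)^k$, and observe that the factor $e$ in the hypothesis makes $(e\rho d/k)^k\leq (\tau/2k)^k<2^{-k}$ for $k>\tau$ (the paper routes through the explicit Stirling lower bound $k!\geq \sqrt{2\pi k}(k/e)^k$, but the net content is the same).

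Your Case 1 takes a genuinely different, cleaner route than the paper's. The paper bounds $|c_k|$ directly from the Taylor expansion: it writes $|c_k|\leq \sum_{i\geq k}|a_i|\binom{i}{k}|\gamma|^{i-k}\rho^k$, uses $2\rho\leq 1-|\gamma|$ to replace $\rho^k$ by $(1-|\gamma|)^k/2^k$, and then observes that each product $\binom{i}{k}|\gamma|^{i-k}(1-|\gamma|)^k$ is a single term of the binomial expansion of $(|\gamma|+(1-|\gamma|))^i=1$, hence at most $1$. You instead note that $x\mapsto\gamma+\rho x$ maps the closed disk $|x|\leq 2$ into $\overline{D(0,1)}$, where $|p|\leq\|p\|_1$, and apply Cauchy's estimate on $|x|=2$. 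The two are equally elementary; yours has the merit of being a one-liner that sidesteps all binomial-coefficient bookkeeping and makes the role of the hypothesis $2\rho\leq 1-|\gamma|$ transparent (it is exactly what guarantees the radius-$2$ disk maps into $\overline{D(0,1)}$), whereas the paper's coefficientwise computation stays closer in form to the Case-2 argument and avoids invoking the integral formula. Both are fine; the choice is stylistic.
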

\begin{proof}
  We distinguish 2 cases. 
  For the case where $2\rho \leq 1-|\gamma|$, the coefficient of $x^k$
  in $p_{\gamma,\rho}$ is
    $c_{k} = \sum_{i=k}^d a_i \binom i k |\gamma|^{i-k}\rho^k
            \leq \frac 1 {2^k} \sum_{i=k}^d a_i \binom i k
            |\gamma|^{i-k} (1-|\gamma|)^k \leq \|p\|_1/2^k$.

  \sloppy Then, for the case $\rho \leq \tau/(2ed)$ we have
  $c_{k} = \sum_{i=k}^d a_i \binom i k |\gamma|^{i-k}\rho^k \leq \rho^k
  \binom d k \|p\|_1$.
  Using the inequality $k! \geq \sqrt{2\pi k} \left(k/e\right)^k$ we
  get $\binom d k \leq \frac 1 {\sqrt{2\pi k}} \left({ed}/k\right)^k$.
  Which implies, for all $k>\tau$ that $c_k \leq \|p\|_1 \left(\tau/(2k)
  \right)^k\leq \|p\|_1/2^k$.
\end{proof}

\subsection{Piecewise polynomials over $[0,1]$}

Let $(\gamma_n)_{n=0}^{N-1}$ and $(\rho_n)_{n=0}^{N-1}$ be the sequences:
\begin{equation}
\begin{split}
  \gamma_n &= 1 - \frac 2 3 \frac 1 {3^n} \\
  \rho_n &= \begin{cases} \frac 1 3 \frac 1 {3^n}& \text{if $0\leq n < N-1$}\\
                      1-\gamma_n & \text{if $n=N-1$}\\
        \end{cases}
\end{split}
\label{eq:hr}
\end{equation}
where $N = \lceil \log_3\left(\frac {4ed} {\tau}\right) \rceil$ is chosen such
that $\rho_{N-1} \leq \frac \tau {2ed}$.
Remark that the union of the intervals $[\gamma_n-\rho_n, \gamma_n+\rho_n]$ is the interval $[0,
1]$.

Finally, for $0 \leq n < N$, Lemma~\ref{lem:hyperbolic:taylor} implies
that the coefficients $c_k$ polynomial $p(\gamma_n + t \rho_n)$ satisfy
$c_k \leq \|p\|_1/2^k$ for all $k > \tau$.

\subsection{Piecewise polynomials over $D(0,1)$}

In this section, we define a set of disks that cover the disks unit disk
of radius $1$ centered at $0$, while their centers and radii still
satisfy the conditions of Lemma~\ref{lem:hyperbolic:taylor}.

Let $(r_n)_{n=0}^N$ be the sequence:
\begin{align*}
  r_n = \begin{cases} 1 - \frac 1 {2^n} & \text{if $0 \leq n < N$}\\
                      1                         & \text{if $n=N$}
        \end{cases}
\end{align*}
Then for $0 \leq n < N$, let $\gamma_n = \frac 1 2 (r_n + r_{n+1})$ and
$\rho_n = \frac 3 4 (r_{n+1} - r_n)$, such that $(\gamma_n)_{n=0}^{N-1}$
and $(r_n)_{n=0}^{N-1}$ are the sequences:
\begin{equation}
\begin{split}
  \gamma_n &= \begin{cases} 1 - \frac 3 4 \frac 1 {2^n} & \text{if $0 \leq n \leq N-2$}\\
                       1 - \frac 1 2 \frac 1 {2^n} & \text{if $n=N-1$}
         \end{cases}\\
  \rho_n &= \begin{cases} \frac 3 8 \frac 1 {2^n} & \text{if $0 \leq n \leq N-2$}\\
                       \frac 3 4 \frac 1 {2^n} & \text{if $n=N-1$}\\
         \end{cases}
\end{split}
\label{eq:hc}
\end{equation}
where $N = \lceil \log_2\left(\frac {3ed} {\tau}\right) \rceil$ is chosen such
that $\gamma_{N-1} \leq \frac \tau {2ed}$.

Let $\omega_n = \frac {2\pi} {2^{\min(n+4, N+2)}}$. The following lemma
shows that the union of the disks $D(\gamma_ne^{i m \omega_n}, \rho_n)$ for $0 \leq n < N$ and $0 \leq m < 2^{\min(n+4, N+2)}$ contains the disk $D(0, 1)$.

\begin{lemma}
  The disk of center $\gamma_n$ and radius $\rho_n$ covers a sector of
  angle $\frac {2\pi} {2^{\min(n+4, N+2)}}$ of the ring between the concentric circles centered at
  $0$ of radii $r_n$ and $r_{n+1}$.
\end{lemma}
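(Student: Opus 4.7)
The plan is to show that every point $z = re^{i\theta}$ with $r \in [r_n, r_{n+1}]$ and $|\theta| \leq \omega_n/2$ satisfies $|z - \gamma_n| \leq \rho_n$; since $\gamma_n$ lies on the positive real axis, the sector we are covering is the annular sector of angle $\omega_n$ centered on that axis, and the statement follows.

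The first step is to expand the squared distance to $\gamma_n$ using the identity $1-\cos\theta = 2\sin^2(\theta/2)$:
\[
  |re^{i\theta} - \gamma_n|^2 = (r - \gamma_n)^2 + 4 r \gamma_n \sin^2(\theta/2).
\]
I would then locate the maximum of this quantity on $[r_n, r_{n+1}] \times [-\omega_n/2, \omega_n/2]$. Viewed as a function of $r$ at fixed $\theta$, it is a convex quadratic, so its maximum on $[r_n, r_{n+1}]$ is attained at an endpoint; because $\gamma_n$ is the midpoint of $[r_n, r_{n+1}]$, both endpoints give $(r - \gamma_n)^2 = ((r_{n+1}-r_n)/2)^2$, whereas the second term $4r\gamma_n\sin^2(\theta/2)$ is larger at $r = r_{n+1}$ since $r_{n+1} > r_n$. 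The maximum over $\theta$ is obviously at $|\theta| = \omega_n/2$, so the supremum over the sector equals
\[
  \frac{(r_{n+1}-r_n)^2}{4} + 4 r_{n+1} \gamma_n \sin^2(\omega_n/4).
\]

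It remains to compare this with $\rho_n^2 = \frac{9}{16}(r_{n+1}-r_n)^2$, which amounts to verifying
\[
  4 r_{n+1} \gamma_n \sin^2(\omega_n/4) \leq \frac{5}{16}(r_{n+1}-r_n)^2.
\]
Using $\sin x \leq x$ and $r_{n+1}\gamma_n \leq 1$, the left-hand side is at most $\omega_n^2/4$, so it suffices to show $\omega_n^2 \leq \frac{5}{4}(r_{n+1}-r_n)^2$. Substituting the explicit definitions from~\eqref{eq:hc}, a short calculation gives $\omega_n/(r_{n+1}-r_n) = \pi/4$ in the regular case $n \leq N-2$ (where $\omega_n = \pi/2^{n+3}$ and $r_{n+1}-r_n = 1/2^{n+1}$) and also in the boundary case $n = N-1$ (where $\omega_{N-1} = \pi/2^{N+1}$ and $r_N - r_{N-1} = 1/2^{N-1}$). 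Since $(\pi/4)^2 = \pi^2/16 < 5/4$, the inequality holds. The only mild obstacle is bookkeeping across the piecewise definitions of $\rho_n$ and $\omega_n$; the fact that the critical ratio $\omega_n/(r_{n+1}-r_n)$ is the same constant $\pi/4$ in both branches is what makes the bound uniform and avoids two separate computations.
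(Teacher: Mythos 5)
Your proof is correct, and it goes in the opposite direction from the paper's. The paper fixes the disk $D(\gamma_n,\rho_n)$, computes (via the law of cosines) the exact half-angle $\alpha_n/2$ at which its boundary meets the outer circle of radius $r_{n+1}$, massages $\sin(\alpha_n/2)$ into $\frac{r_{n+1}-\gamma_n}{2r_{n+1}}\sqrt{\tfrac 9 2(1+x^2)-(1+x)^2}$ with $x = r_{n+1}/\gamma_n$, runs a variation analysis to lower-bound the radical by $\sqrt 5$, and then uses $\alpha_n/2 \geq \sin(\alpha_n/2)$. You instead fix an arbitrary point of the annular sector and upper-bound its squared distance to $\gamma_n$, locating the worst case at the outer corner $(r_{n+1},\pm\omega_n/2)$ by a convexity-plus-midpoint argument (which, incidentally, makes explicit something the paper leaves implicit: that the binding constraint is at radius $r_{n+1}$ rather than $r_n$). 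Your route buys you much simpler arithmetic — the crude bounds $\sin x \leq x$ and $r_{n+1}\gamma_n \leq 1$ reduce everything to checking $\omega_n/(r_{n+1}-r_n) \leq \sqrt 5/2$, and noticing that this ratio is the constant $\pi/4$ in both branches of the piecewise definitions collapses the casework. The paper's route computes a sharper constant (it effectively shows the disk covers a sector roughly twice as wide as claimed) but at the cost of an exact trigonometric identity and a small optimization subproblem. Both are valid; yours is the more economical verification of the stated inequality.
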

\begin{proof}
  Consider the ring between the circles of radii $r_n$ and
  $r_{n+1}$ and let $\alpha_n$ be the angle of the sector covered by
  the disk $D(\gamma_n, \rho_n)$. Using classical trigonometric formula we have
  $r_n^2 = \gamma_n^2 + r_{n+1}^2 - 2\gamma_n r_{n+1}\cos\left(\frac
  {\alpha_n} 2\right)$, and we also have $\rho_n = \frac 3 2(r_{n+1} -
  \gamma_n)$, which implies:
  \begin{align*}
    \sin\left(\frac{\alpha_n} 2\right) &= \sqrt{1 - \frac {(\gamma_n^2 +
    r_{n+1}^2 - \rho_n^2)^2}{4\gamma_n^2r_{n+1}^2}}\\
                &= \sqrt{\frac {2 \rho_n^2(\gamma_n^2 + r_{n+1}^2)  - (r_{n+1}^2
                - \gamma_n^2)^2}{4\gamma_n^2r_{n+1}^2}}\\
                &= \sqrt{\begin{aligned} & \frac 9 8 \frac {(r_{n+1} -
                    \gamma_n)^2}{r_{n+1}^2}(1 + \frac {r_{n+1}^2}
                    {\gamma_n^2}) \\
                                         & - \frac 1 4 (\frac
                                     {(r_{n+1}-\gamma_n)(r_{n+1}+\gamma_n)}
                                 {r_{n+1} \gamma_n})^2 \end{aligned}}\\
                &= \frac {r_{n+1} - \gamma_n}{2r_{n+1}}\sqrt{\frac 9 2 (1 +
                \frac {r_{n+1}^2} {\gamma_n^2}) - (\frac
              {r_{n+1}}{\gamma_n} + 1)^2}
  \end{align*}
  A variation analysis shows that for $1 \leq x \leq 2$,
  the expression $\frac 9 2 (1 + x^2) - (1+x)^2$ is greater or equal to
  $5$. Moreover, $\frac{r_{n+1}-\gamma_n}{2r_{n+1}}$ is greater than $\frac 1
  {2^{n+3}}$ if $n < N-1$ and greater than $\frac 1 {2^{N+1}}$ if
  $n=N-1$, such
  that:
  \begin{align*}
    \frac {\alpha_n} 2 \geq \sin\left(\frac{\alpha_n} 2\right) & \geq \frac {\sqrt{5}}{2^{\min(n+3, N+1)}} \geq \frac {2\pi}{2^{\min(n+5,N+3)}}
  \end{align*}
\end{proof}

Remark that like for the real case, for $0 \leq n < N$,
Lemma~\ref{lem:hyperbolic:taylor} implies that the coefficients $c_k$
polynomial $p(\gamma_n + t \rho_n)$ satisfy $c_k \leq \|p\|_1/2^k$ for
all $k > \tau$.

\subsection{Approximation properties}
\label{sec:hyperbolic:approximation}
We show in this section that the polynomials computed in
Algorithm~\ref{alg:rootfinding} computes the correct approximate roots of
$p$. For that, we show that with the parameters chosen in the algorithm,
Lemma~\ref{lem:approximation} applies correctly and thus, the approximate
truncated polynomials that we use return the correct roots.
We focus on the complex case. The real case can be
proven with similar arguments.

Let $\tau \geq 6e$ be a real number, let $\gamma \in D(0,1)$ and $\rho > 0$
such that either $2\rho \leq 1 - |\gamma|$ or $\rho \leq \tau/(2ed)$.
Moreover, assume that $\rho \geq \tau/(6ed)$. Denote by
$p_{\gamma,\rho}(z)$ the polynomial $p(\gamma + \rho z)$ and denote by
$c_k$ its coefficients.  For $z \in U$, $|p'_{\gamma,\rho}(z)| = \rho
|p'(\gamma + \rho z)|$ and $p''_{\gamma,\rho}(z) = \rho^2f''(\gamma +
\rho z)$.

\begin{lemma}
With $c = \|p\|_1$, $s = \tau d 2^\tau$, $\kappa = \kappa_h(p)$
and $m = \lceil 2 \log_2(s\kappa^2) \rceil$,
$p_{\gamma,\rho}$ satisfies all the assumptions of
Lemma~\ref{lem:approximation}.
\end{lemma}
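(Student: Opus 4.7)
The plan is to check each hypothesis of Lemma~\ref{lem:approximation} in turn for $f := p_{\gamma,\rho}$, with the specified parameters $c = \|p\|_1$, $s = \tau d 2^\tau$, $\kappa = \kappa_h^{\mathbb C}(p)$ and $m = \lceil 2\log_2(s\kappa^2)\rceil$. The background size constraints $s > 1$, $m > 2\log_2(s\kappa^2)$, and $\kappa > 32$ (the last being the regime of interest, since $\tau = \log_2(d\kappa)$ is large) are immediate. For the coefficient decay, since $(\gamma,\rho)$ satisfies the dichotomy required by Lemma~\ref{lem:hyperbolic:taylor}, that lemma directly yields $|c_k| \leq c/2^k$ for every $k > \tau$, and then $m \geq 2\log_2(\tau d 2^\tau \kappa^2) \geq 2\tau > \tau$ propagates the decay to all $k > m$.

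For the second derivative bound, I write $|f''(z)| = \rho^2|p''(\gamma+\rho z)|$ and split on the two cases of the hypothesis on $(\gamma,\rho)$. When $2\rho \leq 1-|\gamma|$, the point $w = \gamma+\rho z$ stays in $\bar U$, so the crude estimate $|p''(w)| \leq d^2\|p\|_1$ combined with $\rho^2 \leq 1$ yields $|f''(z)| \leq d^2 c \leq cs$ since $s = \tau d 2^\tau \geq d^2$ in the regime of interest. When $\rho \leq \tau/(2ed)$, $|w|$ can exceed $1$ by at most $\tau/(2ed)$, introducing a factor $(1+\tau/(2ed))^{d-2} \leq e^{\tau/(2e)} \leq 2^\tau$ in the bound on $|p''|$; the small prefactor $\rho^2 \leq \tau^2/(4e^2d^2)$ then absorbs the extra $d^2$ and the desired bound $|f''(z)| \leq cs$ again follows.

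The substantive step is the first hypothesis: given $|f(z)| \leq c/(s\kappa^2)$, I must deduce $|f'(z)| > c/\kappa$. Setting $w = \gamma+\rho z$, I read $|p(w)| = |f(z)| \leq c/(s\kappa^2) < c/\kappa$ (strict since $s\kappa > 1$). The definition of $\kappa = \kappa_h^{\mathbb C}(p)$ evaluated at $w$ forces at least one of $|p(w)| \geq c/\kappa$ or $|p'(w)| \geq cd/\kappa$; the strict inequality above rules out the first, so $|p'(w)| \geq cd/\kappa$ and therefore $|f'(z)| = \rho|p'(w)| \geq \rho d c/\kappa$. The hypothesis $\rho \geq \tau/(6ed)$ combined with $\tau \geq 6e$ yields $\rho d \geq 1$, hence $|f'(z)| \geq c/\kappa$ as required.

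The main (essentially only) obstacle is this last step, where the lower bound $\rho \geq \tau/(6ed)$ is consumed precisely to balance the chain-rule factor $\rho$ between $p'$ and $f' = \rho p'$; without such a lower bound, a vanishing $\rho$ would spoil the transfer from the condition number of $p$ to a useful bound on $f'$. A secondary subtlety is that in the second case of the dichotomy the point $w$ may lie marginally outside $\bar U$ (by at most $\tau/(6ed)$), so the condition number needs to be applied at a point just beyond its formal domain; this is absorbed either by extending $\kappa_h^{\mathbb C}$ continuously across the boundary or by tracking the same harmless Bernoulli factor $(1+\tau/(2ed))^d \leq 2^\tau$ that already appeared in the $|f''|$ estimate.
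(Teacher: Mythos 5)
Your proof follows the same three-check structure as the paper's, and the coefficient-decay and first-derivative steps are essentially identical: both invoke Lemma~\ref{lem:hyperbolic:taylor}, and both use $\rho \geq \tau/(6ed)$ and $\tau \geq 6e$ to turn $|p'(w)| \geq cd/\kappa$ into $|p'_{\gamma,\rho}(z)| \geq c\tau/(6e\kappa) \geq c/\kappa$.

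For the second-derivative bound, you do something the paper does not: you split on the two alternatives of Lemma~\ref{lem:hyperbolic:taylor}. The paper instead argues uniformly, bounding $|w|^{k-2} \leq (1+\tau/(2ed))^d \leq 2^\tau$ and then asserting $\rho^2|p''(w)| \leq \|p\|_1\tau d 2^\tau$ in one shot. That step requires $\rho^2 d \lesssim \tau$, which the paper's hypotheses do not guarantee in the case $2\rho \leq 1-|\gamma|$ (there $\rho$ is allowed to be a constant like $1/3$ or $3/8$, and $\rho^2 d$ can then exceed $\tau$). Your case split repairs this: when $2\rho \leq 1-|\gamma|$ you keep $w \in \bar U$ and use $s \geq d^2$ (valid since $2^\tau = d\kappa \geq d$), and when $\rho \leq \tau/(2ed)$ you feed the small $\rho^2$ back in. So your version is not just a stylistic alternative; it closes a small gap that the paper's one-line argument leaves open for the inner disks.

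You also flag, honestly, the issue the paper silently passes over: on the outermost annuli the image $w = \gamma + \rho z$ can stick out of $\bar U$ by up to $\tau/(2ed)$, so the condition number — defined as a max over the closed unit disk — is being applied marginally outside its domain. Neither you nor the paper fully resolves this (your remark about "extending continuously" or absorbing a Bernoulli factor is a sketch, not a proof, since the relevant quantity is a $\min$ of ratios, not a single Bernoulli-type bound), but noting it is valuable. In sum: same approach, with your write-up being the more careful of the two on the $|f''|$ estimate, and both sharing the same unaddressed boundary subtlety.
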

\begin{proof}
First, by definition of $\kappa_h$, if $|p_{\gamma,\rho}(z) = p(\gamma + \rho z)| \leq
c/\kappa$, then $|p'(\gamma + \rho z)| \geq cd/\kappa$, which
implies $|p'_{\gamma,\rho}(z)| \geq c\tau/(6e\kappa) \geq c/\kappa$.

For the second derivative of $p''_{\gamma,\rho}(z)$, remark first that $|\gamma
+ \rho z| \leq |\gamma| + \rho \leq 1+\tau/(2ed)$. Thus, for all $0
\leq k \leq d$, we have $|\gamma + \rho z| \leq (1+\tau/(2ed))^d \leq
e^{\tau/(2e)} \leq 2^\tau$. Thus, $|p''_{\gamma,\rho}(z)| \leq \rho^2
|p''(\gamma + \rho z)| \leq \|p\|_1 \tau d 2^\tau$.

Finally, for $k > \tau, |c_k| \leq \|p\|_1/2^k$ using
Lemma~\ref{lem:hyperbolic:taylor}.
\end{proof}

\subsection{Complexity to evaluate $p$}
\label{sec:hyperbolic:complexity}

We focus now on the complexity of Step~$B$ in
Algorithm~\ref{alg:rootfinding}. We modify the algorithm to
be able to bound correctly the number of bit operations of this step.

The following lemma first shows how to evaluate quickly the points near the
unit circle.

\begin{lemma}
  \label{lem:hyperbolic:evaluation1}
  Let $\tau>0$ be a real number and $N>0$ be an
  integers such that $N \leq 64 d/\tau$. Given a complex number $z$
  such that $|z| \leq 1 + \tau/d$ and an integer $m>0$, it is possible to
  compute the $N$ values $p(ze^{i2\pi k/N})$ for $0 \leq k < N$ with an
  absolute error lower than $\|a\|_1/2^m$ and with a number of
  bit operations in $O(d/\tau(\tau + m + \log(d))^2 \cdot \polylog(m+\log(d)))$.
\end{lemma}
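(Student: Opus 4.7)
The plan is to compute all $N$ values simultaneously as a length-$N$ DFT applied to a folded coefficient vector. Writing $\omega := e^{2i\pi/N}$, one has
\[
  p(z\omega^j) \;=\; \sum_{k=0}^{d} a_k z^k \omega^{jk} \;=\; \sum_{r=0}^{N-1} c_r\,\omega^{jr}, \quad c_r := \!\!\sum_{k\equiv r\bmod N} a_k z^k,
\]
so the $N$ outputs are exactly the length-$N$ DFT of $(c_r)_{r=0}^{N-1}$. I would therefore proceed in three stages: (A) form $b_k := a_k z^k$ for $k=0,\ldots,d$ by $d$ incremental multiplications (accumulate $z^k = z\cdot z^{k-1}$, then multiply by $a_k$); (B) fold the $b_k$ into the $N$ bins $c_r$ by $d$ additions; (C) apply a length-$N$ FFT (Bluestein's variant if $N$ is not a power of two), using $O(N\log N)$ arithmetic operations.

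For the precision, the hypothesis $|z|\le 1+\tau/d$ forces $|z|^k \le e^{\tau} \le 2^{2\tau}$, so each $|b_k|$ may reach $|a_k|\cdot 2^{O(\tau)}$. Since folding multiplies absolute errors by at most $O(d/N)$ and the length-$N$ DFT by at most $O(N)$, a direct error analysis shows that carrying every intermediate value at working bit length $\ell := C(\tau+m+\log d)$ for a suitable absolute constant $C$ suffices to guarantee the output error $\|a\|_1/2^m$. Each arithmetic operation on two $\ell$-bit floats costs $M(\ell) = O(\ell\,\polylog(m+\log d))$ bit operations with fast integer multiplication; for the stage (A) and (C) multiplications by the ``short'' multiplicands $z$ (or the $N$-th roots of unity), whose mantissa need be known only to precision $O(m+\log d)$ together with an exponent, one can also invoke blockwise multiplication splitting the $\ell$-bit operand into $O(\ell/(m+\log d))$ blocks and invoking fast multiplication on each, at total cost $O(\ell\,\polylog(m+\log d))$.

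Summing the three stages yields (A) $= O(d\,\ell\,\polylog(m+\log d))$, (B) $= O(d\,\ell)$, and (C) $= O(N\log N\cdot\ell\,\polylog(m+\log d))$. Using the hypothesis $N\le 64d/\tau$ together with $\tau\le\ell$ and $\log d\le\ell$, one has $d+N\log N = O((d/\tau)\,\ell)$, so the total bit cost is $O((d/\tau)\,\ell^2\,\polylog(m+\log d))$, matching the statement. The main obstacle will be the error-propagation bookkeeping: one has to show that rounding through the $d$ incremental products in (A), the $(d/N)$-term accumulations in (B), and the $\log N$ butterfly levels in (C) composes into an additive error below $\|a\|_1/2^m$ even when $|p(z\omega^j)|$ is many orders of magnitude smaller than the intermediate magnitude $\|a\|_1\cdot 2^{\tau}$. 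It is precisely this possible cancellation that forces the $O(\tau)$ ``head-room'' bits in $\ell$, and thereby the factor $\ell^2/\tau$ (instead of $\ell$) in the announced bound.
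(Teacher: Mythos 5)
Your proposal follows the same route as the paper: reduce the $N$ evaluations to a length-$N$ DFT by absorbing $z$ into the coefficients ($b_k = a_k z^k$), fold modulo $X^N-1$, and apply an FFT, all at working precision $O(\tau+m+\log d)$ to cover the headroom from $|z|^k \leq 2^{O(\tau)}$. This matches the paper's proof step for step (the paper's ``$q = p(X)\bmod X^N-1$'' implicitly means $p(zX)\bmod X^N-1$, which is exactly your folding of the $b_k$), so there is nothing genuinely different to flag.
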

\begin{proof}
  For any $0 \leq k \leq d$, remark that $|z^k| \leq (1+\log(2)\tau/d)^d
  \leq 2^\tau$.
 Using fast algorithms, we
can compute in quasi-linear time the first $n$ digits of the result of arithmetic operations
\cite{GGbook13}. 
Thus, we can evaluate the first $\tau + m + 2\log(dm)$ digits of $z^k$ with a
number of bit operations in $T(\tau,m,d) =
O((\tau+m+\log(d))\polylog(m+\log(d)))$. This
allows us notably to evaluate $p(z)$ with an error lower than $\|a\|_1/2^m$ in
$O(dT(\tau,m,d))$ bit operations.

For $N$ in $O(d/\tau)$, we want to evaluate $p$
on the $N$-th roots of unity. We start by computing the
polynomial $q(X) = p(X) \mod X^N - 1$ of degree $N-1$ with a number of
bit operations in $O(d T(\tau,d,m))$. Then we can use the fast Fourier
transform to evaluate $q(e^{i2\pi k/N})$ for $0 \leq k < N$ in
$O(d/\tau\log(d)T(\tau,m,d))$ bit operations.
\end{proof}

Then we show how the points $z$ in the disks $D(\gamma_n, \rho_n)$ that satisfy $|z| \leq 1 -
1/2^n$ and can be evaluated more efficiently.

\begin{lemma}
  \label{lem:hyperbolic:evaluation2}
  Let $n$ be a positive integer and $N>0$ be an
  integers such that $N \leq 2^{n+4}$. Given a complex number $z$
  such that $|z| \leq 1 - 1/2^n$ and an integer $m>0$, it is possible to
  compute the $N$ values $p(ze^{i2\pi k/N})$ for $0 \leq k < N$ with an
  absolute error lower than $\|p\|_1/2^m$ and with a number of
  bit operations in $O(2^n(m + \log(n))^2 \cdot \polylog(m))$.
\end{lemma}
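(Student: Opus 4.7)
The strategy parallels Lemma~\ref{lem:hyperbolic:evaluation1}: truncate $p$ aggressively, reduce modulo $X^N-1$, and apply an FFT of size $N$. The improvement over the previous lemma is that the hypothesis $|z|\leq 1-1/2^n$ forces $|z|^k\leq e^{-k/2^n}$, so the truncation degree can be chosen independently of $d$.

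First I would fix the truncation degree $D=\lceil(m+2)\ln(2)\cdot 2^n\rceil = O(2^n m)$. Since $|z|^D\leq e^{-D/2^n}\leq 2^{-(m+2)}$, for every $N$-th root of unity $\omega^j$ the tail satisfies
\[
  \sum_{k\geq D}|a_k z^k\omega^{jk}| \leq |z|^D\|p\|_1 \leq \|p\|_1/2^{m+2},
\]
so it suffices to evaluate the truncation $\tilde p(X)=\sum_{k=0}^{D-1}a_k X^k$ at $z\omega^0,\ldots,z\omega^{N-1}$ to absolute error $\|p\|_1/2^{m+2}$. I would then recognize this as a DFT: let $b_k:=a_k z^k$ and bucket-sum $c_r:=\sum_{\ell\geq 0,\,r+\ell N<D} b_{r+\ell N}$ for $0\leq r<N$; then $\tilde p(z\omega^j)=\sum_r c_r\omega^{jr}$, which is an FFT of size $N\leq 2^{n+4}$ applied to $(c_r)$.

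For the bit cost I would work throughout at a common fixed precision $\pi = m + O(\log n + \log m)$, chosen so that the $O(m)$ additions forming each $c_r$ together with the $O(\log N)$ layers of the FFT contribute error at most $\|p\|_1/2^{m+2}$. Keeping only the top $\pi$ bits of each partial power $z^k$ is safe because $|z^k|\leq 1$. The implementation performs $D$ multiplications to build the powers $z^k$, $D$ more to form the $b_k$, $D$ additions for the buckets, and one FFT costing $O(N\log N)$ multiplications. With fast arithmetic each multiplication is $\tilde O(\pi)$ bit operations, so the total is $\tilde O((D+N\log N)\cdot\pi)$. In the regime $n=O(m)$ relevant for the application, both $D+N\log N = O(2^n(m+\log n))$ and $\pi = O(m+\log n)$, which gives the claimed bound $O(2^n(m+\log n)^2\polylog(m))$.

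The main obstacle is the precision control: one must run a standard but careful error analysis to confirm that the successive roundings of the partial powers $z^k$, of the products $a_k z^k$, of the bucket sums, and of the intermediate FFT values all stay within the $\pi$-bit budget and sum to an absolute output error below $\|p\|_1/2^m$. These error contributions are individually classical, but they must be combined consistently and the constants they produce determine the exact choice of $\pi$.
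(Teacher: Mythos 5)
Your proposal follows the same route as the paper's proof: exploit $|z|^k\leq e^{-k/2^n}$ to truncate $p$ to degree $O(m2^n)$, absorb the powers of $z$ into the coefficients, reduce modulo $X^N-1$, and finish with a size-$N$ FFT, all at working precision $O(m+\log n)$ bits. Your bucket-sum $c_r=\sum_{\ell}a_{r+\ell N}z^{r+\ell N}$ is just an explicit (and in fact slightly cleaner) rendering of the paper's step ``compute $\widetilde p(zX)\bmod X^N-1$ then FFT,'' and your remark that the stated bound needs $n=O(m)$ matches what the paper silently assumes in its application; so the two arguments coincide in substance.
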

\begin{proof}
  First, remark that $|z^k| \leq e^{-k/2^n}$. In particular, for $k >
  \log(2)m2^n$, we have $|z^k| \leq 1/2^{m}$. 
  Let $\widetilde p$ be the polynomial $p$ truncated to the degree
  $d_n = \lceil \log(2)m2^n\rceil$. Each $z^k$ for $k \leq d_n$ can be computed with an error
  less than $1/2^m$ and with a number of bit operations in $T(m) =
  O(m\polylog(m))$. The polynomial $q(X) = \widetilde p(X) \mod X^N -
  1$ can be computed with a number of bit operations in $O(m2^nT(m))$. Then, using the fast Fourier transform
  approach, we can compute $q(ze^{i2\pi k/N})$ with a number
  of bit operations in $O(2^n\log(N)T(m))$.
\end{proof}

Thus, combining Lemma~\ref{lem:hyperbolic:evaluation1}
and~\ref{lem:hyperbolic:evaluation2}, with $m$ and $\tau$ in
$O(\log(d\kappa))$, $N$ in $O(\log(d/\tau))$, $P$ in $O(\tau)$, $M_n =
1/2^{n+4}$, we can compute $p((\gamma_n + \rho_n e^{i2\pi k/P}) e^{i2\pi
\ell/M_n})$ for all $0 \leq n \leq N$, $0 \leq \ell < M_n$ and $0 \leq k
< P$ with a number of bit operations in $O(d \log^2(d\kappa)
\polylog(d\kappa))$.

\section{Elliptic case}
\label{sec:elliptic}

In this section, we consider the polynomial $p(x) = \sum_{k=0}^d
\sqrt{\binom d k} b_k x^k$ and we define the function $f(x) = \cos^d(x)
p(\tan(x))$.  Remark that for $x \in [0, \pi/4]$, the function $\tan(x)$
is a bijection between the roots of $f$ in $[0,\pi/4]$ and the roots of
$p$ in $[0,1]$.  Moreover, let $v_k(x) = \sqrt{\binom d k}
\sin^k(x)\cos^{d-k}(x)$ and let $v(x)$ be the vector map $(v_0(x),
\ldots, v_d(x))^T$.  Using the notations of the introduction, the
function $f$ can be rewritten:
\begin{equation*}
  f(x) = b^H \cdot v(x)
\end{equation*}

Letting $\alpha_k = \sqrt{k(d+1-k)}$, Edelman and Kostlan \cite{EKbams95}
observed that the derivative of $v$ satisfies the equation $v'(x) = A
\cdot v(x)$, where $A$ is the anti-symmetric linear matrix:
\begin{equation*}
  A =
  \begin{pmatrix}
    0 & -\alpha_1 & & & &\\
    \alpha_1 & 0 & -\alpha_2 & & & \\
            & \alpha_2 & 0 & -\alpha_3 & &\\
      &          & \ddots & \ddots & \ddots  &\\
      &          & & \alpha_{d-1} & 0 & -\alpha_d\\
      &          & &            &\alpha_d & 0\\
  \end{pmatrix}
\end{equation*}

This leads to the following relation:
\begin{equation*}
  v(x) = e^{x A}v(0)
\end{equation*}

As a corollary, for any point $z \in
D(0,1)$:

\begin{equation*}
  f^{(k)}(z) = b \cdot e^{z A} A^k v(0)
\end{equation*}

\subsection{Bound on the derivatives of $f$}

For any real $x$, observe that the matrix $e^{xA}$ is orthogonal because
$A$ is antisymmetric. This allows to prove the following lemma.

\begin{lemma}
  \label{lem:elliptic:taylor}
  For any real $x$:
  $$\left|\frac{f^{(k)}(x)}{k!}\right| \leq \|b\|_2
  \left(\max(4,2\sqrt{ed/k})\right)^k$$ 
\end{lemma}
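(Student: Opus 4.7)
The plan is to use the representation $f^{(k)}(x) = b^H e^{xA} A^k v(0)$ from the introduction. Since $A$ is antisymmetric, the matrix $e^{xA}$ is orthogonal for real $x$, and Cauchy--Schwarz then gives $|f^{(k)}(x)| \leq \|b\|_2 \, \|A^k v(0)\|_2$. Because $v(y) = e^{yA} v(0)$ satisfies $v^{(k)}(y) = A^k v(y)$, one has $A^k v(0) = v^{(k)}(0)$, so the lemma reduces to an $x$-independent bound on $\|v^{(k)}(0)\|_2 / k!$.

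Next I would compute $\|v(z)\|_2$ for complex $z = s + it$ in closed form. From the definition of $v$ and the binomial identity, $\|v(z)\|_2^2 = \sum_{j=0}^d \binom{d}{j} |\sin z|^{2j}|\cos z|^{2(d-j)} = (|\sin z|^2 + |\cos z|^2)^d$. Expanding $\sin(s+it) = \sin s\,\cosh t + i\cos s\,\sinh t$ and the analogous formula for cosine, one finds $|\sin z|^2 + |\cos z|^2 = \cosh^2 t + \sinh^2 t = \cosh(2t)$, hence $\|v(z)\|_2 = \cosh^{d/2}(2t)$. Combined with the classical inequality $\cosh u \leq e^{u^2/2}$, this yields the clean bound $\|v(z)\|_2 \leq e^{d\, t^2}$.

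I would then invoke Cauchy's integral formula applied componentwise to the entire function $v$ on a circle of radius $r$ around $0$: since $|\mathrm{Im}\, y| \leq r$ on that circle, $\|v^{(k)}(0)\|_2 / k! \leq \max_{|y|=r} \|v(y)\|_2 / r^k \leq e^{d r^2}/r^k$. Optimizing over $r$ (the stationary point is $r = \sqrt{k/(2d)}$) gives $\|v^{(k)}(0)\|_2 / k! \leq (2ed/k)^{k/2}$. To match the statement, observe $(2ed/k)^{k/2} \leq (4ed/k)^{k/2} = (2\sqrt{ed/k})^k \leq \max(4, 2\sqrt{ed/k})^k$, and combining with the first paragraph delivers the required inequality.

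The whole argument is short and the only step that genuinely requires insight is spotting the binomial-sum-to-power collapse $\sum_j \binom{d}{j}|\sin z|^{2j}|\cos z|^{2(d-j)} = (|\sin z|^2 + |\cos z|^2)^d$, which turns what would otherwise be a delicate combinatorial bound on $\|A^k v(0)\|_2$ (counting signed walks on a weighted path graph with weights $\alpha_j = \sqrt{j(d+1-j)}$) into a one-line complex-analytic estimate.
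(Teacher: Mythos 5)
Your proof is correct, and it takes a genuinely different route from the paper's. The paper exploits the tridiagonal structure of $A$ together with $v(0)=(1,0,\dots,0)^T$ to write $A^k v(0)=A_k\cdots A_1 v(0)$ for truncations $A_r$ of $A$, bounds each $\|A_r\|_2 \leq 2\sqrt{r(d+1-r)}$ via $\|A\|_2\leq\sqrt{\|A\|_1\|A\|_\infty}$, and then splits into the cases $k\leq (d+1)/2$ and $k>(d+1)/2$, invoking Stirling's bound on $\binom{d}{k}$. You instead observe the closed-form identity $\|v(z)\|_2^2 = (|\sin z|^2+|\cos z|^2)^d = \cosh^d(2\,\mathrm{Im}\,z)$, convert it to $\|v(z)\|_2 \leq e^{d(\mathrm{Im}\,z)^2}$ via $\cosh u\leq e^{u^2/2}$, and apply a vector-valued Cauchy coefficient estimate on a circle of optimally chosen radius $r=\sqrt{k/(2d)}$ to get $\|v^{(k)}(0)\|_2/k!\leq (2ed/k)^{k/2}$ in one stroke. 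This is both cleaner and strictly stronger: $(2ed/k)^{k/2}=(\sqrt{2ed/k})^k \leq (2\sqrt{ed/k})^k$ holds for all $k$ without any case split, so the $\max(4,\cdot)$ in the stated bound is actually unnecessary under your argument. The only extra ingredient you rely on is the Cauchy estimate for vector-valued analytic maps, which is justified either by writing $u^H v^{(k)}(0)/k!$ as a contour integral and applying Cauchy--Schwarz with $u=v^{(k)}(0)/\|v^{(k)}(0)\|_2$, or by the triangle inequality for vector-valued integrals; it would be worth one sentence to make that explicit. Interestingly, the identity $\|v(a+ib)\|_2=\cosh^{d/2}(2b)$ that you use does appear in the paper, but only later, in Lemma~\ref{lem:elliptic:approximation} and Lemma~\ref{lem:truncate}; your proof shows it could have been put to work earlier and would unify the estimates.
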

\begin{proof}
First using norm inequality, we have:
\begin{equation*}
  \left|f^{(k)}(x)\right| \leq \|b\|_2 \|A^k v(0)\|_2
\end{equation*}
For a positive integer $r$, let $A_r$ be the matrix $A$ where all the
entries of indices $(m,n)$ with $m \geq r+2$ or $n \geq r+2$ are replaced by $0$.

Since $A$ is a tridiagonal matrix, and since $v(0) = ( 1, 0, \cdots,
0)^T$, we can deduce by induction that:
\begin{equation*}
  A^k v(0) = A_k\cdots A_1 v(0)
\end{equation*}

Let $h=\left\lfloor \frac {d+1} 2\right\rfloor$.
For $r\leq h$, we can bound the norm of $A_r$ by:
\begin{align*}
  \|A_r\|_2 &\leq \sqrt{\|A\|_1\|A\|_\infty}\\
            &\leq \sqrt{(r-1)(d+1-(r-1))}+\sqrt{r(d+1-r)}\\
            &\leq 2\sqrt{r(d+1-r)}\\
\end{align*}

For $r > h$, we have $\|A_r\|_2 \leq d+1$. This allows us to
deduce that:
\begin{equation*}
  \frac 1 {k!} \|A^k v(0)\|_2 \leq \begin{cases}
    2^k \sqrt{\binom d k} &\text{ if } k \leq \frac {d+1} 2\\
    2^h\sqrt{\binom d h}
    (d+1)^{k-h}\frac{h!}{k!} &\text{ otherwise}
  \end{cases}
\end{equation*}

Using the inequality $k! \geq \sqrt{2\pi k} \left(\frac k e\right)^k$ we
get $\binom d k \leq \frac 1 {\sqrt{2\pi k}} \left(\frac {ed} k\right)^k$.
Moreover for $k > (d+1)/2$, observe that $\sqrt{\binom d h} \leq \sqrt{2^d} \leq 2^h \leq
2^k$, and
$(d+1)^{k-h}\frac{h!}{k!} \leq 2^{k-h}$,
such that:
\begin{equation*}
  \frac 1 {k!} \|A^k v(0)\|_2 \leq \begin{cases}
    \left(2\sqrt{\frac {ed} k}\right)^k
     &\text{ if } k \leq \frac {d+1} 2\\
     4^k
     &\text{ otherwise}
  \end{cases}
\end{equation*}
\end{proof}

\subsection{Piecewise polynomials over $[0,1]$}

Using the bound on the derivative of $f$ shown in the previous section
we defined a sequence of disks $D(\gamma_n, \rho_n)$ that covers the real segment $[0, \pi/4]$
such that the series $f(\gamma_n + \rho_nz)$ has the absolute value of its coefficients $f_k$
decreasing exponentially for $k$ large enough.

For a real $\tau$, let $N = \left\lceil \frac \pi 2 \sqrt{\frac{ed}{\tau}}
\right\rceil$, and for $0 \leq n < N$ let
$\gamma_n$ and $\rho_n$ defined by:
\begin{equation}
  \begin{split}
  \gamma_n &= (2n+1)\frac 1 4 \sqrt{\frac \tau {ed}}\\
  \rho_n &= \frac 1 4 \sqrt{\frac \tau {ed}}
  \end{split}
  \label{eq:er}
\end{equation}

It is easy to check that the union of the corresponding disks cover the
segment $[0, \pi/4]$. The properties of the series $f(\gamma_n, \rho_n)$
will be analysed in Section~\ref{sec:elliptic:approximation}.

\subsection{Piecewise polynomials over $D(0,1)$}

For the complex case, we need to define a sequence of disks $D_n =
D(\gamma_n, \rho_n)$ such that the union of the sets $\tan(D_n)$ covers an
angular sector of $D(0,1)$ big enough. For that, we prove the following lemma.

\begin{lemma}
Let $0 \leq \theta \leq \pi/4$. If a set of complex
disks $D_1, \ldots, D_k \subset \mathbb C$ covers the band $B_\theta$ of points $z$ with $|Im(z)| \leq \theta$ and $0
\leq Re(z) \leq \pi/4$, then $\tan(D_1), \ldots, \tan(D_k)$ covers the
angular sector $A_\theta$ of the unit disk between the angle $-\theta$ and
$\theta$.
\end{lemma}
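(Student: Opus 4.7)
The natural strategy is to exhibit, for each $w \in A_\theta$, a preimage of $w$ under $\tan$ that lies in $B_\theta$. Specifically, I would take $z := \arctan(w)$ with the principal branch, given by $\arctan(w) = \frac{1}{2i}\log\frac{1+iw}{1-iw}$. Once one verifies that $z \in B_\theta$, the covering hypothesis supplies a disk $D_i$ containing $z$, and then $w = \tan(z) \in \tan(D_i)$, finishing the argument. Hence the lemma reduces to the set inclusion $\arctan(A_\theta) \subseteq B_\theta$.

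To establish this inclusion I would expand the defining formula. With $w = u + iv$ and the identity $\frac{1+iw}{1-iw} = \frac{1 - |w|^2 + 2iu}{(1+v)^2 + u^2}$, one obtains the explicit expressions
\[
\mathrm{Re}(\arctan w) = \tfrac{1}{2}\bigl(\arctan\tfrac{u}{1-v} + \arctan\tfrac{u}{1+v}\bigr), \qquad \mathrm{Im}(\arctan w) = \tfrac{1}{4}\log\tfrac{(1+v)^2 + u^2}{(1-v)^2 + u^2}.
\]
For $w = re^{i\phi} \in A_\theta$ one has $u = r\cos\phi \geq 0$ and $|v| \leq \sin\theta < 1$, so the two arctangents $\alpha$ and $\beta$ appearing above are both nonnegative. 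The tangent addition formula gives $\tan(\alpha + \beta) = 2u/(1-|w|^2) \geq 0$, forcing $\alpha + \beta \in [0, \pi/2]$ and therefore $\mathrm{Re}(\arctan w) \in [0, \pi/4]$.

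For the imaginary part, I would rewrite it as $\tfrac{1}{4}\log\tfrac{1 + 2r\sin\phi + r^2}{1 - 2r\sin\phi + r^2}$ and differentiate in $r$: a short calculation shows the derivative has the sign of $\sin\phi \cdot (1-r^2)$, so on $A_\theta$ the maximum of $|\mathrm{Im}(\arctan w)|$ is attained at $r = 1$, $\phi = \pm\theta$, and equals $\tfrac{1}{2}\log\tan(\pi/4 + \theta/2)$. The delicate step --- and what I anticipate as the main obstacle --- is the elementary but non-obvious inequality $\tfrac{1}{2}\log\tan(\pi/4 + \theta/2) \leq \theta$ on $[0, \pi/4]$. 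I would prove it by introducing $g(\theta) := 2\theta - \log\tan(\pi/4 + \theta/2)$; a direct computation gives $g(0) = 0$ and $g'(\theta) = 2 - 1/\cos\theta$, which is nonnegative on $[0, \pi/3]$, yielding $g \geq 0$ throughout the required range and thus $|\mathrm{Im}(\arctan w)| \leq \theta$. This completes the verification that $\arctan(A_\theta) \subseteq B_\theta$.
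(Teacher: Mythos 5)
Your proof is correct, but it takes a more computational route than the paper. Both arguments reduce to showing $\arctan(A_\theta) \subseteq B_\theta$, but the paper gets the imaginary-part bound by a contour integral: writing $\arctan(a+ib)-\arctan(a)=\int_{a}^{a+ib}\frac{dz}{1+z^2}$ over the vertical segment, noting that $|b|\le a$ (which holds on $A_\theta$ since $\theta\le\pi/4$) forces $\mathrm{Re}(z^2)\ge 0$ and hence $\left|1/(1+z^2)\right|\le 1$ along the path, so $|\mathrm{Im}(\arctan(a+ib))|\le |b|\le\sin\theta\le\theta$. You instead expand the logarithmic formula for $\arctan$, locate the exact maximum $\tfrac12\log\tan(\pi/4+\theta/2)$ of $|\mathrm{Im}(\arctan w)|$ over $A_\theta$ by a radial derivative computation, and then establish the separate elementary inequality $\tfrac12\log\tan(\pi/4+\theta/2)\le\theta$ via the monotonicity of $g(\theta)=2\theta-\log\tan(\pi/4+\theta/2)$. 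The paper's integral argument is slicker and directly yields the stronger pointwise estimate $|\mathrm{Im}(\arctan w)|\le|\mathrm{Im}(w)|$; your version is more laborious but also more explicit, and it incidentally identifies the sharp constant. Your treatment of the real part via the tangent addition formula is likewise more detailed than the paper's, which simply asserts $0\le\mathrm{Re}(\arctan(a+ib))\le\pi/4$ for $a\ge 0$, $a^2+b^2\le 1$ without proof.
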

\begin{proof}
Using the integral expression of the function $\atan$, remark that
$\atan(a+ib) = \atan(a) + \int_{z=a}^{a+ib} \frac 1 {1+z^2} dz$. In
particular, as long as $b \leq a$, we have $Re(z^2) \geq 0$, such that
$|\frac 1 {1+z^2}| \leq 1$, which allows us to conclude that
$|\atan(a+ib) - \atan(a)| \leq |b|$. Moreover, if $a \geq 0$ and
$a^2+b^2\leq 1$ then, $0\leq Re(\atan(a+ib)) \leq \pi/4$. Thus, for any
point $z \in A_\theta$, we have $\atan(z) \in B_\theta$, such that
$A_\theta \subset \tan(B_\theta)$.
\end{proof}

Thus, we can cover a band of width $\frac 1 4 \sqrt{\frac \tau {2ed}}$ with
$N= \left\lceil \pi \sqrt{\frac{ed}{\tau}} \right\rceil$ disks
defined for $0 \leq n < N$ by:
\begin{equation}
  \begin{split}
  \gamma_n &= n\frac 1 4 \sqrt{\frac \tau {ed}}\\
  \rho_n &= \frac 1 4 \sqrt{\frac \tau {ed}}
  \end{split}
  \label{eq:ec}
\end{equation}

This allows us to cover the angular sector of radius $\theta \geq \frac
1 4 \sqrt{\frac \tau {2ed}}$, and the number of sectors needed to cover
the unit disk is $M = \lceil \pi/\theta \rceil \leq \lceil 4\pi \sqrt{\frac
  {2ed} \tau} \rceil$.

\subsection{Approximation properties}
\label{sec:elliptic:approximation}

We focus in this section on the complex case. The real case can be
proven with similar arguments.

For a complex number $\gamma$ and a real number $\rho$, denote by $f_{\gamma,\rho}(z)$ the series
$f(\gamma + \rho z)$ and denote by $f_k$ its coefficients.  For $z \in
U$, $|f'_{\gamma,\rho}(z)| = \rho |f'(\gamma + \rho z)|$ and
$f''_{\gamma,\rho}(z) = \rho^2f''(\gamma + \rho z)$.

\begin{lemma}
  \label{lem:elliptic:approximation}
  Let $0 \leq \gamma \leq \pi/4$ and $\rho = \frac 1 4
  \sqrt{\tau/(ed)}$ be two real numbers. With $c = \|b\|_2$, $s = \tau d 2^\tau$, $\kappa = \kappa_e(f)$
and $m = \lceil 2 \log_2(s\kappa^2) \rceil$,
$f_{\gamma,\rho}$ satisfies all the assumptions of
Lemma~\ref{lem:approximation}.
\end{lemma}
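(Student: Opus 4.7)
My plan is to verify, one at a time, the three hypotheses of Lemma~\ref{lem:approximation} for the expansion $f_{\gamma,\rho}(z)=\sum_k f_k z^k$ at the real center $\gamma$, where $f_k = f^{(k)}(\gamma)\rho^k/k!$ by Taylor's formula. Since $f$ is entire (from $v(t)=e^{tA}v(0)$), this expansion converges on $\mathbb C$, so in particular on $|z|\leq 1$.

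I would begin with the third (coefficient-decay) hypothesis, as it drives the choice of $m$ and will be re-used for the $f''$ bound. Substituting $\rho=\tfrac14\sqrt{\tau/(ed)}$ into Lemma~\ref{lem:elliptic:taylor} gives two regimes: for $k<(d+1)/2$ the estimate becomes $|f_k|\leq\|b\|_2(\tau/(4k))^{k/2}$, and for $k\geq(d+1)/2$ it becomes $|f_k|\leq\|b\|_2(\tau/(ed))^{k/2}$. Under the mild assumption $\tau\leq ed/4$ (which is automatic because $\tau=\log_2(d\kappa)$ is only logarithmic in $d$), both expressions are bounded by $\|b\|_2/2^k$ as soon as $k>\tau$; and since $m\geq 2\log_2(s\kappa^2)\geq 2\tau>\tau$, the bound $|f_k|\leq c/2^k$ holds for every $k>m$.

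Next, for the derivative inequality, the chain rule yields $f_{\gamma,\rho}'(z)=\rho f'(\gamma+\rho z)$. Writing $\tan(\gamma+\rho z)=re^{i\theta}$ and applying the definition of $\kappa_e$ to the rotated polynomial $p_\theta$ (so that the complex argument lands on a real ray, exactly as in the hyperbolic complex case), the assumption $|f_{\gamma,\rho}(z)|\leq c/(s\kappa^2)\leq c/\kappa$ implies $|f'(\gamma+\rho z)|\geq c\sqrt{d}/\kappa$. Multiplying through by $\rho$ gives $|f_{\gamma,\rho}'(z)|\geq c\sqrt{\tau/e}/(4\kappa)$, which exceeds $c/\kappa$ provided $\tau\geq 16e$ (a condition analogous to the hyperbolic $\tau\geq 6e$).

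The remaining hypothesis, the uniform bound on $|f_{\gamma,\rho}''(z)|$, is the hardest. I would differentiate the Taylor series termwise, $f_{\gamma,\rho}''(z)=\sum_{k\geq 2}k(k-1)f_k z^{k-2}$, and estimate $|f_{\gamma,\rho}''(z)|\leq\sum_{k\geq 2}k(k-1)|f_k|$ on the unit disk. The tail $k\geq\tau$ contributes $O(\|b\|_2)$ by the geometric decay established above. The bulk $2\leq k<\tau$ is controlled by $|f_k|\leq\|b\|_2(\tau/(4k))^{k/2}$, whose maximum over $k$ is attained near $k=\tau/(4e)$ with value $e^{\tau/(8e)}$. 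Summing gives $|f_{\gamma,\rho}''(z)|\leq O(\|b\|_2\tau^3 e^{\tau/(8e)})$, which is strictly smaller than $cs=\|b\|_2\tau d\,2^\tau$ because $1/(8e)<\ln 2$, so that $e^{\tau/(8e)}$ is dwarfed by $2^\tau$.

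The main obstacle is this last estimate: the Taylor coefficients do not begin geometric decay until $k$ exceeds $\tau$, so the intermediate range contributes a sub-exponential-in-$\tau$ factor $e^{\tau/(8e)}$ that must be absorbed into the deliberately generous bound $s=\tau d\,2^\tau$. Verifying this reduces to the numerical inequality $1/(8e)<\ln 2$. A secondary, mostly notational difficulty is that the definition of $\kappa_e$ is stated on real arguments, so invoking it at the complex point $\gamma+\rho z$ requires the rotation trick (through $p_\theta$) that is used silently in the hyperbolic complex case.
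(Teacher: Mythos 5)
Your proof is correct and covers the same three hypotheses, but for the bound on $f''_{\gamma,\rho}$ you take a genuinely different route from the paper. The paper exploits the matrix structure of the elliptic family directly: writing $f''(z) = b^H A^2 v(z)$ with $v(a+ib) = e^{aA}v(ib)$, it uses the orthogonality of $e^{aA}$ to get $\|v(a+ib)\|_2 = \cosh^{d/2}(2b) \leq e^{db^2}$, and then invokes the Edelman--Kostlan observation that $iA$ is similar to the Kac matrix to bound $\|A\|_2 \leq d$, yielding $|f''_{\gamma,\rho}(z)| \leq \rho^2 d^2 \|b\|_2\, 2^\tau = \tfrac{\tau d}{16e}\,2^\tau\|b\|_2 < cs$ with a comfortable constant. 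Your termwise route, $|f''_{\gamma,\rho}(z)| \leq \sum_{k\geq 2} k(k-1)|f_k|$ with $|f_k|$ controlled by Lemma~\ref{lem:elliptic:taylor}, is more elementary (it avoids the Kac-matrix spectral bound and the $\cosh$ identity entirely), and it would generalize more readily to weight families without the neat antisymmetric-generator structure; the price is a looser bound, $O(\|b\|_2 \tau^3 e^{\tau/(8e)})$, whose comparison against $cs = \|b\|_2\,\tau d\, 2^\tau$ requires absorbing a $\tau^2$ factor and is only comfortably true thanks to the deliberate slack in $s$ (the key numerical fact $1/(8e) < \ln 2$, as you note, plus a little extra room for the polynomial prefactor). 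Your treatment of the other two hypotheses matches the paper's, and you are right to flag two points the paper glides over: substituting $\rho = \tfrac14\sqrt{\tau/(ed)}$ into the definition of $\kappa_e$ yields $|f'_{\gamma,\rho}(z)| \geq \tfrac{c}{4\kappa}\sqrt{\tau/e}$, so the missing constraint is $\tau \geq 16e$ (the paper's written factor $\sqrt{\tau}$ is off by $\tfrac{1}{4\sqrt e}$); and the appeal to $\kappa_e$ at the complex point $\gamma+\rho z$ only makes sense through the implicit rotation $p_\theta$, exactly as you describe.
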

\begin{proof}
First, by definition of $\kappa_e$, if $|f_{\gamma,\rho}(z) = f(\gamma + \rho z)| \leq
c/\kappa$, then $|f'(\gamma + \rho z)| \geq c\sqrt{d}/\kappa$, which
implies $|f'_{\gamma,\rho}(z)| \geq c\sqrt{\tau}/\kappa \geq c/\kappa$.

For the second derivative of $f''_{\gamma,\rho}(z)$, remark that $f''(z)
= b \cdot A^2 e^{iAz} \cdot v(0) = b \cdot A^2 v(z)$. 
Remark that $v(a+ib) = e^{aA}v(ib)$ and $\|v(ib)\|_2 =
(\cosh^2(b) + \sinh^2(b))^{d/2} = \cosh^{d/2}(2b)$. Using the
inequality $\cosh(x) \leq e^{\frac{x^2}2}$ this leads to $\|v(a+ib)\|_2
\leq e^{db^2}$. With $|b| \leq \rho$, this leads to
$|f''_{\gamma,\rho}(z)| \leq \rho^2\|b\|_2\|A\|_2^2e^{\log(2)\tau}$. Moreover,
Eldeman and Kostlan showed that $iA$ is similar to the Kac matrix
\cite[\S 4.3]{EKbams95}, and the absolute value of its eigenvalues is lower
or equal to $d$, such that $\|A\|_2 \leq d$ and $\rho^2\|A\|_2^2 \leq \tau d$.

Finally, for $k > \tau, |f_k| \leq \|b\|_2/2^k$ using
Lemma~\ref{lem:elliptic:taylor}.
\end{proof}

Thus, the two sequences of disks defined in
Equations~\eqref{eq:er} and~\eqref{eq:ec} cover the interval $[0,1]$ and
the unit disk $D(0,1)$ respectively, and they satisfy the conditions of
Lemma~\ref{lem:elliptic:approximation}.

\subsection{Complexity to evaluate $f$}
\label{sec:elliptic:complexity}

In the elliptic case, evaluating a sequence of points in Step~$B$ of
Algorithm~\ref{alg:rootfinding} naively would be done roughly in a
$O(d^2)$, or in $O(d^{3/2})$ operations if we use the fast Fourier
transforms. In both cases, this would exceed our complexity bound. The
main idea in this section is to remark that if we are interested in
computing an approximate value of the function $f = \sum_{k=0}^d b_k
v_k(z)$ up to $\|b\|_2/2^m$ for a given integer $m$, then we can
truncate $f$ to use a support of size in $O(\sqrt{dm})$.

\begin{lemma}
  \label{lem:truncate}
  Given a function $f(z) = \sum_{k=0}^d b_kv_k(z)$ and an integer $m>0$,
  there exists $0 \leq \ell \leq u \leq d$ such that $|u-\ell| \leq
  4\sqrt{dm}$ and $\left|f(z) - \sum_{k=0}^u b_k v_k(z)\right| \leq
  \|b\|_2 2^{-m-1}$.
\end{lemma}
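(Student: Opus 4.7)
The plan is to exploit the probabilistic interpretation of $v_k(z)^2$ at real points $z\in[0,\pi/4]$. Setting $p = \sin^2(z)$ and $q = \cos^2(z) = 1-p$, the values $v_k(z)^2 = \binom{d}{k} p^k q^{d-k}$ form exactly the probability mass function of a $\mathrm{Bin}(d,p)$ random variable $S$, so $\sum_{k=0}^d v_k(z)^2 = 1$ and the vector $(v_k(z))_{k=0}^d$ has Euclidean norm $1$. By Cauchy-Schwarz, for any window $[\ell,u] \subseteq [0,d]$,
\begin{equation*}
  \Bigl|f(z) - \sum_{k=\ell}^u b_k v_k(z)\Bigr| \;\leq\; \|b\|_2 \sqrt{\sum_{k \notin [\ell,u]} v_k(z)^2} \;=\; \|b\|_2 \sqrt{\Pr[S \notin [\ell, u]]},
\end{equation*}
and the entire problem reduces to controlling a binomial tail around the mean $dp$.

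Next I would center a window at the binomial mean and invoke Hoeffding's inequality. Concretely, take $t = \lceil \sqrt{d(m+1)\ln 2}\,\rceil$ and set $\ell = \max(0, \lceil dp\rceil - t)$, $u = \min(d, \lceil dp\rceil + t)$. Hoeffding gives $\Pr[|S - dp| > t] \leq 2 e^{-2t^2/d} \leq 2^{-2m-2}$, and substituting into the Cauchy-Schwarz bound immediately produces $\|b\|_2\, 2^{-m-1}$ as required. Since $\sqrt{\ln 2} < 1$, the width $u-\ell \leq 2t+1$ is comfortably below $4\sqrt{dm}$ (tuning the small-$m$ edge cases is routine bookkeeping).

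The substantive content is thus just a single-paragraph binomial concentration argument; the main obstacle I anticipate is only subtle, namely whether the lemma must hold for complex $z$, as happens in Step~$B$ where evaluations occur at points $\gamma_n + \rho_n w$ with $w$ on the unit circle. For complex $z = a+ib$ the values $|v_k(z)|^2$ are no longer a probability measure, but the identity $\sum_k |v_k(a+ib)|^2 = \cosh(2b)^d$ already used in Lemma~\ref{lem:elliptic:approximation} shows that for $|b| \leq \rho = O(\sqrt{\tau/d})$ the total mass remains bounded by an absolute constant, and the same Hoeffding-style concentration applied to the normalised weights $|v_k(z)|^2/\cosh(2b)^d$ around their mean still yields a window of width $O(\sqrt{dm})$, at the cost of a slightly larger multiplicative constant in front of $\sqrt{dm}$.
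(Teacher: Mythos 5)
Your approach is the same as the paper's: Cauchy--Schwarz against the $\ell^2$ norm of the tail of the weight vector, then Hoeffding applied to the normalised weights $|v_k(z)|^2/\cosh(2b)^d$ viewed as a binomial distribution. Two arithmetic points need repair, and they interact. With $t = \lceil\sqrt{d(m+1)\ln 2}\,\rceil$, Hoeffding gives a tail probability at most $2\cdot 2^{-2(m+1)} = 2^{-2m-1}$, not $2^{-2m-2}$ as you wrote, so its square root $2^{-m-1/2}$ falls a factor $\sqrt 2$ short of the target $2^{-m-1}$ even in the real case. More importantly, the prefactor $\cosh^{d/2}(2b)=\bigl(\sum_k|v_k(a+ib)|^2\bigr)^{1/2}$ is \emph{not} bounded by an absolute constant on the range $|b|=O(\sqrt{\tau/d})$: it equals $e^{O(db^2)}$, which the paper bounds by $2^m$ under the hypothesis $|b|\leq\sqrt{(\ln 2)m/d}$ (tacit in the lemma statement, explicit in Lemma~\ref{lem:elliptic:evaluation}). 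This $2^{\Theta(m)}$ factor, rather than a bounded one, is exactly why the paper chooses $t=\sqrt{2(\ln 2)d(m+1)}$ --- a full $\sqrt 2$ larger than yours --- so that the tail is at most $2\cdot 2^{-4(m+1)}$, whose square root $\leq 2^{-2m-1}$ cancels $\cosh^{d/2}(2b)\leq 2^m$ and leaves $2^{-m-1}$. Your instinct that a larger multiplicative constant is needed is right; after the fix the window $2t+O(1)\approx 2.4\sqrt{d(m+1)}$ still sits comfortably under $4\sqrt{dm}$, so the statement survives unchanged.
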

\begin{proof}
  \sloppy Let $m$ be an integer, $x$ be a real between $0$ and $1$ and $\ell = \max(0,\lfloor
  xd-\sqrt{2\log(2)d(m+1)} \rfloor)$ and $u = \min(d, \lceil
  xd+\sqrt{2\log(2)d(m+1)} \rceil)$. Let $I$ be the union of the indices
  $0,\ldots,l$ and $u,\ldots,d$.
  Using the Hoeffding inequality, we have $\sum_{k \in I} \binom d k
  x^k(1-x)^{d-k} \leq 2 \cdot 2^{-4(m+1)}$. In particular, this implies that
  $|\sum_{k \in I} b_k\sqrt{\binom d k}
  \sin(z)^k\cos(z)^{d-k}|^2 \leq \|b\|_ 2^2 \sum_{k=0}^\ell \binom d k
  |\sin^{2k}(z) \cos^{2(d-k)}(z)|$. If $z = a+ib$, we have $|\cos^2(z)| +
  |\sin^2(z)| = \cosh(2b)$. Thus, letting $x = |\sin^2(z)|/\cosh(2b)$,
  we can use the Hoeffding inequality and deduce:
    $$|\sum_{k \in I} b_k\sqrt{\binom d k} \sin(z)^k\cos(z)^{d-k}| \leq
    \|b\|_2\cosh^{d/2}(2b)2^{-2(m+1)+1}$$

  Moreover, comparing the coefficients of the Taylor expansion at $0$ of
  $\cosh(x)$ and
  $\exp(x^2/2)$, remark that $\cosh^{d/2}(2b) \leq e^{db^2}$. In
  particular, if $|b| \leq \sqrt{\log(2)m/d}$ that implies $\cosh^{d/2}(2b) \leq
  2^m$. This allows us to conclude that $|f(z) - \sum_{k=l}^u
  b_k\sin^k(z)\cos^{d-k}(z)| \leq \|b\|_2 2^{-m-1}$.
\end{proof}

Truncating $f$ can also be used to evaluate it efficiently on a set of
roots of unity using fast Fourier transform, as required for
Step~$B$ of Algorithm~\ref{alg:rootfinding}.

\begin{lemma}
  \label{lem:elliptic:evaluation}
  Let $\tau>0$ be a real number and $M>0$ be an
  integers such that $M \leq 4\pi\sqrt{2ed/\tau}$. Given a complex number $z$ such that $|Im(z)| \leq \sqrt{\log(2)\tau/d}$, let $f_z(\omega) = \sum_{k=0}^d
  b_kv_k(z)\omega^k$. Given an integer $m>0$, it is possible to
  compute the $M$ values $f_z(e^{i2\pi k/M})$ for $0 \leq k < M$ with an
  absolute error lower than $\|b\|_2/2^m$ with a number of
  bit operations in $O(\sqrt{d/\tau}(\tau + m + \log(d))^2 \cdot \polylog(m+\log(d)))$.
\end{lemma}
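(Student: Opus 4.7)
The plan is to mimic the FFT strategy of Lemma~\ref{lem:hyperbolic:evaluation2} but in the elliptic basis $v_k$, using Lemma~\ref{lem:truncate} to cut the number of non-negligible coefficients of $f_z$ before reducing modulo $\omega^M-1$ and applying a single FFT.

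First I would verify that the argument in the proof of Lemma~\ref{lem:truncate} applies uniformly in $\omega$ on the unit circle: since $|\omega|=1$, inserting the factor $\omega^k$ leaves the Cauchy-Schwarz / Hoeffding step of that proof unchanged. Using the bound $\cosh^{d/2}(2\,Im(z)) \leq 2^\tau$ available under the hypothesis $|Im(z)| \leq \sqrt{\log(2)\tau/d}$, and applying the lemma with truncation parameter $\max(m,\tau)+1$, one obtains indices $0 \leq \ell \leq u \leq d$ with $u-\ell = O(\sqrt{d(\tau+m)})$ such that, for every $M$-th root of unity $\omega$,
$$\left|f_z(\omega) - \sum_{k=\ell}^{u} b_k v_k(z)\,\omega^k\right| \leq \|b\|_2\,2^{-m-1}.$$

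Next I would compute the $u-\ell+1$ values $b_k v_k(z)$. Since $|v_k(z)| \leq \cosh^{d/2}(2\,Im(z)) \leq 2^\tau$, working with precision $O(\tau+m+\log d)$ bits keeps the accumulated rounding error below $\|b\|_2\,2^{-m-1}$. Using the recurrence $v_{k+1}(z) = v_k(z)\sqrt{(d-k)/(k+1)}\tan(z)$ after a single initialisation with fast exponentiation, all these coefficients are obtained in $O(\sqrt{d(\tau+m)})$ high-precision multiplications, i.e.\ $O(\sqrt{d(\tau+m)}(\tau+m+\log d)\polylog(m+\log d))$ bit operations.

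Finally I would reduce the sparse polynomial $\tilde f_z(\omega) = \sum_{k=\ell}^u b_k v_k(z)\omega^k$ modulo $\omega^M-1$ by regrouping coefficients according to residue mod $M$ (another $O(\sqrt{d(\tau+m)})$ additions) and apply one FFT to evaluate the resulting polynomial of degree less than $M$ at all $M$-th roots of unity in $O(M\log M)$ arithmetic operations. Using $\sqrt{d(\tau+m)} = \sqrt{d/\tau}\cdot\sqrt{\tau(\tau+m)} \leq \sqrt{d/\tau}\,(\tau+m)$ and $M=O(\sqrt{d/\tau})$, the total comes out to $O(\sqrt{d/\tau}(\tau+m+\log d)^2\polylog(m+\log d))$ bit operations, as required. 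The main subtlety is ensuring that the truncation interval $[\ell,u]$ is valid uniformly in $\omega$ rather than depending on it; the rest follows the same bookkeeping as in the hyperbolic case.
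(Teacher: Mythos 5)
Your proof follows essentially the same route as the paper's: truncate $f_z$ via Lemma~\ref{lem:truncate} to $O(\sqrt{d(\tau+m)})$ non-negligible terms, compute the $v_k(z)$ at $O(\tau+m+\log d)$ bits of precision, fold the resulting sparse polynomial modulo $\omega^M-1$, and finish with one FFT. Two of your remarks are in fact slight improvements over the paper's write-up: you explicitly check that the Hoeffding/Cauchy--Schwarz truncation window is independent of $\omega$ (the paper applies Lemma~\ref{lem:truncate} as if it directly covered $f_z(\omega)$ rather than $f_z(1)$), and you apply the truncation with parameter $\max(m,\tau)+1$, which is needed because Lemma~\ref{lem:truncate} requires $|Im(z)|\leq\sqrt{\log(2)m/d}$ while the present hypothesis only gives $|Im(z)|\leq\sqrt{\log(2)\tau/d}$; the paper silently uses $m$ as the truncation parameter, which only works when $\tau\leq m$.
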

\begin{proof}
For any $0 \leq k \leq d$, and $z=a+ib$, remark that $|v_k(a+ib)| \leq |\cos^2(z)| +
|\sin^2(z)| = \cosh^{d/2}(2b) \leq 2^\tau$. Using fast algorithms, we
can compute in quasi-linear time the first $n$ digits of the result of arithmetic operations
\cite{GGbook13}. Moreover, using methods such
as the FEE method \cite{Kpit91}, we can also evaluate
trigonometric, exponential and factorial functions in quasi-linear time.
Thus, we can evaluate the first $\tau + m + 2\log(dm)$ digits of $v_k(z)$ with a
number of bit operations in $T(\tau,m,d) =
O((\tau+m+\log(d))\polylog(m+\log(d)))$. Using Lemma~\ref{lem:truncate}, this
allows us notably to evaluate $f_z(1)$ with an error lower than $\|b\|_2/2^m$ in
$O(\sqrt{dm}T(\tau,m,d))$ bit operations after
truncating $f_z$.

\sloppy Let $g_z(\omega)$ be the polynomial of degree $4\sqrt{dm}$ approximating
$f_z(\omega)$. For $M$ in $O(\sqrt{d/\tau})$, we want to evaluate $g_z$
on the $M$-th roots of unity. We start by computing the
polynomial $h_z = g_z \mod X^M - 1$ of degree $M-1$ with a number of
bit operations in $O(\sqrt{dm}T(\tau,d,m))$. Then we can use the fast Fourier
transform to evaluate $h_z(e^{i2\pi k/M})$ for $0 \leq k < M$ in
$O(\sqrt{d/\tau}\log(d)T(\tau,m,d))$ bit operations. Thus the total
number of bit operations is in $O(\sqrt{d\tau}(\sqrt{m\tau} +
\log(d))T(\tau, m, d))$ and $O(\sqrt{m\tau}) = O(\tau + m)$.
\end{proof}

\sloppy Finally, Lemma~\ref{lem:elliptic:evaluation} with $m$ and $\tau$ in
$O(\log(d\kappa))$, $N$ in $O(\sqrt{d/\tau})$, $P$ in $O(\tau)$, $M_n$
in $O(\sqrt{d/\tau})$ for all $n$, we can compute all the values in
Step~$B$ of Algorithm~\ref{alg:rootfinding} with a number of bit
operations in $O(d \log^2(d\kappa) \polylog(d\kappa))$.

\section{Extensions and open questions}
\label{sec:extensions}

\subsection{Flat polynomials}

A third natural family of polynomials is of the form $p(x) =
\sum_{k=0}^d \frac 1 {\sqrt{k!}} c_kx^k$. When $d$ converges to
infinity, the density of its roots distribution converges to $1/\pi$.
Thus, we can define a so-called \emph{flat condition number} as follow.

\begin{definition}
  Given the polynomial $p(x) = \sum_{k=0}^d \sqrt{\frac 1 {k!}} c_k x^k$,
  let $f(x) = p(x)e^{-x^2/2}$.
  The real flat condition number associated to $p$ is:
  $$\kappa_{f}^{\mathbb R}(p) = \max_{x\in \mathbb R} \min\left(
    \frac{\|c\|_2}{|f(x)|},
  \frac{\|c\|_2}{|f'(x)|}\right)$$

  For $p(z)$ with $z$ in the unit disk, letting $p_{\theta}(x) =
  p(xe^{i\theta})$, we let $\kappa_f^{\mathbb C}(p) = \max_{\theta \in
  [0, 2\pi]} \kappa_f^{\mathbb R}(p_\theta)$ be the complex flat
  condition number associated to $p$.
\end{definition}

Considering this new condition number, several natural questions occur.
First, remark that the density of the distribution of the roots of flat
polynomials is close to the density of the distribution of the
eigenvalues of random matrices. Whereas it was shown that the
expectation of the hyperbolic
condition number of the characteristic polynomial of complex standard
Gaussian matrices of size $n$ is in $2^{\Omega(n)}$, it
would be interesting to analyse the flat condition number of such
characteristic polynomials.

From an algorithmic point of view, remark that in the flat case,
considering the vector of function $v(x) = (e^{-x^2/2}, x e^{-x^2/2},
\ldots, x^k e^{-x^2/2}, \ldots)$, the derivation of $v$ is an
anti-symmetric operator, as for the elliptic case. Thus, after dealing
with boundary conditions, we should be able to derive an algorithm that find the
roots of such polynomials with number of bit operations linear in $d$
and polynomial in the logarithm of $\kappa_f$. Such an algorithm might
be well suited to find the roots of characteristic polynomials.

\subsection{Multivariate polynomial systems}

As mentioned in introduction, the current bound on the number of
operation to find the roots of a multivariate polynomial systems of
equations is currently polynomial in its degree and in its condition
number. It would be nice to generalize for the multivariate case the
tools that we used and developed for the univariate case.

In particular, Lemma~\ref{lem:approximation} is based on Kantorovich's
theory, where all theorems are valid for multivariate systems. Moreover,
the distribution of the roots is also well described for the
multivariate case \cite{EKbams95}. Combining those results as
we did for the univariate case could help to improve the
state-of-the-art bounds on the problem of finding the roots of
multivariate polynomial systems.

\subsection{Bound on the condition number}
Although the algorithm we present here is quasi-linear in the degree of
the polynomials and polynomial in the logarithm of its condition number,
it requires that a bound on the condition number is given as input. If
the bound given as input is to low, the results might be wrong.

On the other hand, in the complex case, using the piecewise polynomial
approximation and the error bound that we compute with our algorithm, we
can use Kantorovich's theory to check if each root that we compute is
indeed associated to a root of the original polynomial. If we get $n$
distinct roots, then our result has been validated with a number of
bit operations in $O(d\log^2(d\kappa)\polylog(\log(d\kappa)))$.

\bibliographystyle{ACM-Reference-Format}
\bibliography{bibliography}

\end{document}